\newtheorem{corollary}{Corollary}
\newtheorem{theorem}{Theorem}
\newtheorem{lemma}{Lemma}
\newtheorem{remark}{Remark}
\def \O {{\mathcal O }}
\def \P {{\mathcal P }}
\begin{document}
\title{{ Physical Layer Security in Uplink NOMA Multi-Antenna Systems with Randomly Distributed Eavesdroppers}}

\author{\normalsize Gerardo Gomez, Francisco J. Martin-Vega, F. Javier Lopez-Martinez, Yuanwei Liu \\and Maged Elkashlan \vspace{-1.25cm}% <-
\thanks{Gerardo Gomez, Francisco J. Martin-Vega and F. Javier Lopez-Martinez are with the Departamento
de Ingenier\'ia de Comunicaciones, Universidad de M\'alaga, M\'alaga 29071, Spain (e-mail: ggomez@ic.uma.es, fjmvega@ic.uma.es, fjlopezm@ic.uma.es).}% <-this % stops a space
\thanks{Yuanwei Liu and Maged Elkashlan are with the Queen Mary University of London, London, UK (e-mail: yuanwei.liu@qmul.ac.uk, maged.elkashlan@qmul.ac.uk).}% <-this % stops a space
\thanks{This work has been submitted to the IEEE for publication. Copyright may be transferred without notice, after which this version may no longer be accessible.}
}

\maketitle

\begin{abstract}
The physical layer security of uplink non-orthogonal multiple access (NOMA) is analyzed. A stochastic geometry approach is applied to analyze the coverage probability and the effective secrecy throughput (EST) of the $kth$ NOMA user, where a fixed or an adaptive transmission rate can be used. A protected zone around the legitimate terminals to establish an eavesdropper-exclusion area. We assume that the channel state information associated with eavesdroppers is not available at the base station. { We consider that the base station is equipped with multiple antennas.} The impact of imperfect successive interference cancellation is also taken into account in this work. Our analysis allows an easy selection of the wiretap code rates that maximizes the EST. Additionally, our framework also allows an optimum selection of other system parameters like the transmit power or the eavesdropper-exclusion radius.

\end{abstract}

\begin{IEEEkeywords}
Effective secrecy throughput (EST), non-orthogonal multiple access (NOMA), physical layer security, stochastic geometry.
\end{IEEEkeywords}
%\clearpage

\section{Introduction}

Non-orthogonal multiple access (NOMA) has recently been introduced as a new feature intended to increase the spectrum efficiency in the fifth generation (5G) networks \cite{Ding17,Liu16}. This technique allows serving multiple users simultaneously using the same spectrum resources at the cost of increased intra-cell interferences \cite{Tabassum16}. NOMA may use the power domain jointly with interference cancellation techniques to separate signals, exploiting the path loss differences among users. 

In uplink (UL) NOMA, a set of users transmits simultaneously their signals to their associated base station (BS). As a consequence, the received signal of a particular user suffers from intra-cluster interference, which is a function of the channel statistics of other users. In order to minimize such interference, the BS may apply successive interference cancellation (SIC) to decode signals. SIC technique requires that different message signals arrive to the receiver (BS) with a sufficient power difference so that SIC may be successfully applied. This is typically achieved in the downlink (DL) by means of different weights at the transmitter. However, since the UL channel gains already provide sufficient distinctness between the received signals, such weights are not necessary. In fact, the conventional UL transmit power control intended to equalize the received signal powers of users is not recommended for UL NOMA transmissions since it may remove channel distinctness \cite{Tabassum16}. 

SIC technique in UL NOMA works as follows. The BS first decodes the strongest signal by considering the signals from other users as noise. However, the user with the weakest signal enjoys zero intra-cluster interference since the BS has previously canceled interfering signals (considering ideal conditions). If we consider the possibility of a SIC failure, the error is propagated to all remaining messages. UL NOMA was firstly presented in \cite{Endo12}, by considering the minimum mean squared error (MMSE)-based SIC decoding at the BS. { An interesting survey on NOMA for 5G networks is presented in \cite{Ding17b}, which provides a comprehensive overview of the latest NOMA research results and innovations.} A novel dynamic power allocation scheme for DL and UL NOMA is proposed in \cite{Yang16}.  The outage performance and the achievable sum data rate for UL NOMA is theoretically analyzed in \cite{Zhang16b}. In \cite{Tabassum17}, a framework to analyze multi-cell UL NOMA with stochastic geometry is presented. In \cite{Rabee17}, the optimum received UL power levels using a SIC detector is determined analytically for any number of transmitters. 

The possibility of having a secure communication in NOMA-based scenarios is also a current hot topic. The presence of eavesdroppers is a classical problem in communication theory, ever since Wyner introduced the wiretap channel \cite{Wyner1975}. In the last years, the field of physical layer security over different scenarios has taken an important interest in the research community as a means to provide reliable secure communications, relaxing the complexity and complementing the performance of the required cryptographic technologies. For instance, \cite{Gopala2008} the authors consider the secure transmission of information over an ergodic fading channel in the presence of an eavesdropper. An extension of this work considering a multiple-input multiple-output (MIMO) wiretap channel is analyzed in \cite{Oggier2011}. In \cite{Liu2013}, an analysis is conducted on the probability of secrecy capacity for wireless communications over the Rician fading channels. The communication between two legitimate peers in the presence of an external eavesdropper in the context of free-space optical (FSO) communications is analyzed in \cite{Lopez2015}. In \cite{Chen2017}, a comprehensive survey on various multiple-antenna techniques in physical layer security is provided, with an emphasis on transmit beamforming designs for multiple-antenna nodes. An overview on the state-of-the-art works on physical layer security technologies that can provide secure communications in wireless systems is given in \cite{Liu2017-2}. 

In the particular field of physical layer security with NOMA, a small number of contributions are available. { A simple scenario for a DL NOMA with just one eavesdropper (SISO antenna configuration) in a single cell is addressed in \cite{He2017}.} An analysis of the optimal power allocation policy that maximizes the secrecy sum rate for a DL NOMA scenario is presented in \cite{Zhang2016}. { In \cite{Chen18}, a cooperative NOMA system with a single relay is analyzed assuming that the NOMA users are affected by an eavesdropper.} The work in \cite{Qin2016} analyzes the secrecy outage probability (SOP) in a single-cell DL NOMA scenario in which the eavesdroppers are not part of the cellular system.  \cite{Liu2017} extends previous work by proposing several mechanisms to enhance the SOP in a DL NOMA multi-antenna aided transmission. {  In \cite{Lv2018}, a downlink NOMA scenario with multiple-input single-output (MISO) is addressed, proposing a secure beamforming transmission scheme. The secrecy performance of a two-user downlink NOMA with transmit antenna selection schemes is analyzed in \cite{Lei2017}. The work in \cite{Lei2018} studies the secrecy performance of a dowlink of multiple-input multiple-output (MIMO) scenario, focusing on the impact of a max-min transmit antenna selection strategy. Very recently, one work addressing physical layer security in UL NOMA was published \cite{Jiang18}, although it does not make use of any stochastic geometry tool since locations are deterministic.}

%In the particular field of physical layer security with NOMA, a small number of contributions are available. An analysis of the optimal power allocation policy that maximizes the secrecy sum rate for a DL NOMA scenario is presented in \cite{Zhang2016}. The work in \cite{Qin2016} analyzes the secrecy outage probability (SOP) in a single-cell DL NOMA scenario in which the eavesdroppers are not part of the cellular system.  \cite{Liu2017} extends previous work by proposing several mechanisms to enhance the SOP in a DL NOMA multi-antenna aided transmission. However, up to the authors knowledge, physical layer security in UL NOMA scenario has not been addressed yet. 

\subsection{Motivation and Contributions}

{ The main technical differences and challenges on analyzing the physical layer security in uplink NOMA from the existing studies for downlink NOMA are the following: 
\begin{itemize}
\item
In the uplink NOMA, the BS receives transmissions from all users simultaneously, and consequently, intra-cell interference to a given user is a function of the channel statistics of other users within the cell; however, in downlink NOMA, the intra-cell interference to a user is a function of its own channel statistics \cite{Tabassum17}.
\item
Intra-cluster interfering signals in the uplink NOMA are also the desired signals, therefore, it is not possible to provide the benefits of SIC (enhance the SINR) unequivocally for all users.
\item
In the uplink NOMA, eavesdroppers are randomly positioned near the $N$ legitimate transmitters, independently of the transmitters' location within the cell, whereas in the downlink NOMA, the base station is the unique transmitter, thus simplifying the scenario.
\end{itemize}
}

In this work, we provide a characterization of the physical layer security of UL NOMA. In particular, we provide the following contributions:

\begin{enumerate}
\item
{ We provide new analytical expressions for UL NOMA at the base station with multiple antennas, random spatial locations of eavesdroppers and a protection radius around the legitimate users. This scenario has not been addressed yet to the best of the authors knowledge.} We consider a protected zone around the LUs to establish an eavesdropper exclusion area. 

\item
We analyze the effective secrecy throughput (EST) \cite{Yan15} for uplink NOMA as a performance metric that captures the two key features of wiretap channels (reliability and secrecy) for any number of legitimate users. %{ In that sense, we have included justifications on how the transmission schemes here considered connect with those available in the literature, and why the EST is a relevant metric in practice. }

\item
We analyze previous metrics under two different scenarios: fixed and adaptive transmission schemes from LUs. In the case of fixed transmission rate, the impact of assuming a perfect or imperfect SIC is studied. Our analysis allows determining the wiretap code rates that achieve the locally maximum EST for both scenarios.
 
\end{enumerate}

\subsection{Organization and Notation}
The remainder of this paper is organized as follows. The system model under analysis is introduced in Section II. The analysis of the Signal-to-Interference plus Noise Ratio (SINR) distributions for both legitimate users and eavesdroppers is presented in Section III. In Section IV, analytical expressions for the EST under different scenarios are derived. Numerical results are shown and described in Section V. Finally, we draw conclusions in Section VI.

\textbf{Notation}: Throughout this paper, $\mathbb{E}[\cdot]$ stands for the expectation operator and $\mathbb{P}$ for the probability measure. Random variables (RV) are represented with capital letters whereas lower case is reserved for deterministic values and parameters. If $X$ is a RV, $f_X(\cdot)$, $F_X(\cdot)$, $\bar{F}_X(\cdot)$ and $\mathcal{L}_X(\cdot)$ represent its probability density function (pdf), cumulative distribution function (cdf), complementary cdf (ccdf) and Laplace transform of its pdf, respectively.

\section{System Model}
\label{System Model}
We focus on the UL communication scenario in which LUs are connected to a base station (BS) of radius $r_c$ and centered at the origin. We assume a single cell scenario, as considered in most previous studies related to NOMA \cite{Ding14, Qin2016, Yang16, Zhang2016, Zhang16b, Ding17, Liu2017, Rabee17}. A number of eavesdroppers (EDs) are { randomly} distributed along the whole plane, attempting to intercept the communication between LUs and BS. The spatial distribution of EDs is modeled using a homogeneous Poisson Point Process (PPP) uniformly distributed in $\mathbb{R}^2$, which is denoted by $\Phi_e$ and associated with a density $\lambda_e$. An eavesdropper-exclusion zone of radius $r_p$ (in which no eavesdroppers are allowed to roam) is introduced around the LUs for improving the secrecy performance, as it is also considered in \cite{Liu2017} for the downlink. Fig. \ref{fig1} shows the system model under analysis.

\begin{figure}[h]
\centering
\includegraphics[width=3.4in]{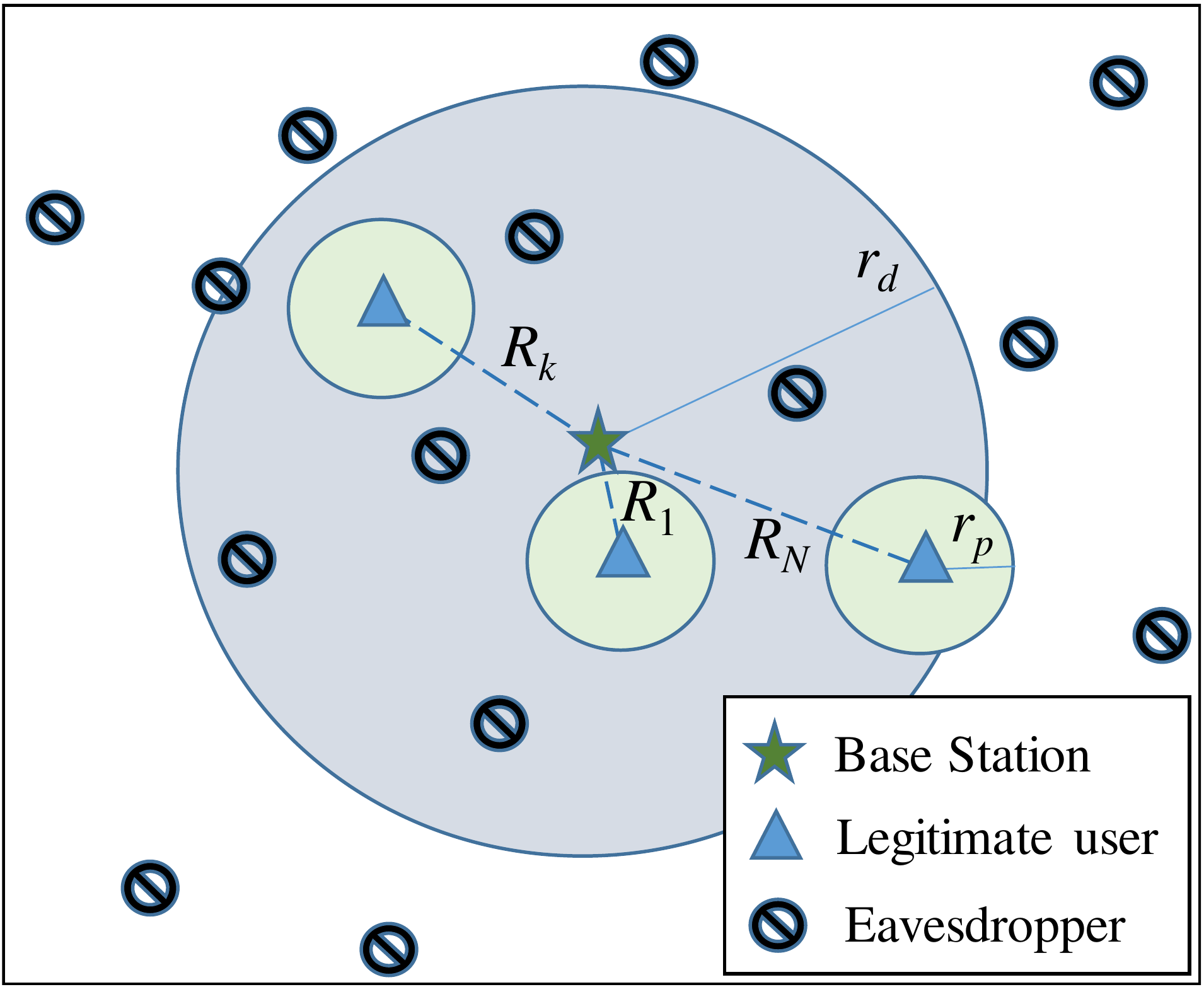}
\caption{System model for secure transmission in UL NOMA}
\label{fig1}
\end{figure}

At each radio resource, the BS gives service to $N$ simultaneous LUs (using NOMA), whose positions are random inside the cell. We assume a random scheduling, i.e. the BS selects randomly the set of $N$ LUs to be scheduled in a given radio resource according to NOMA. The locations of the LUs that are scheduled in a single radio resource are assumed to be uniformly distributed in the cell. Hence, we consider that the resulting set of points (LUs) inside the disk $B(0,r_c)$ is a Binomial Point Process (BPP) $\Phi_B$ with $N$ points, as it is normally assumed in the literature \cite{Zhang16b,Ding14}. { The assumption of a BPP for LUs (instead of a PPP) is due to tractability issues, but at the same time, it provides the necessary spatial correlation between the nodes that are served by the BS. }

We assume that both legitimate transmitters (LUs) and eavesdroppers are equipped with a single antenna each { whereas the BS is equipped with $M$ uncorrelated receive antennas and applies a Maximal Ratio Combining (MRC) reception}. We also assume that LUs' channels and EDs' channels are subject to independent quasi-static Rayleigh fading with equal block length. UL transmit power control is not recommended as justified in the introduction section, and hence, it is not used. 

{ We also consider that the eavesdroppers apply the same SIC method than the BS in order to separate the signals from each transmitter, so that they can technically compromise the communication from each user in the network (and specially if colluding eavesdroppers are considered, which is not the case in this paper). That is, the eavesdroppers measure the received signal power and  first decodes the strongest signal by treating other signals as noise. Afterwards, it cancels the first decoded signal and continues decoding the second strongest signal, and so on. We consider the most detrimental eavesdropper scenario in order to simplify the code rate design.  } 

As stated in \cite{Tabassum16}, the impact of the path-loss factor is generally more dominant than channel fading effects. Hence, for tractability reasons, we assume that ordering of the received signal powers can be approximately achieved by ordering the distances of the users to their serving BS. Let $R_{k}$ be the distance between the $kth$ user and the BS, being $R_{1}\le R_{k} \le  R_{N}$. Power loss due to propagation is modeled using a standard path loss model with $\alpha > 2$, whereas a Rayleigh model is assumed for small-scale fading. Hence, the received signal power at a distance $R_k$ can be simply computed as $H_k {R_k}^{-\alpha}$, where $H_k$ is the fading coefficient. { Note that, since we consider a MRC reception technique at the BS, the desired signal is given by the sum of $M$ independent unit-mean exponentially distributed random variables, yielding a Gamma distribution with ccdf given by
\begin{align}
\label{gammaccdf}
{\overline F _{H_k } (M,x) = e^{ - x }  \cdot \sum\limits_{r = 0}^{M - 1} {\frac{ {x} ^r}{{r!}} } }
\end{align}}
We consider an scenario in which EDs are not a part of the cellular system (passive eavesdropping) and therefore, the channel state information (CSI) associated with EDs' channel is not available at the base station. In addition, we address two different cases regarding the LUs transmission mode:
\begin{itemize}
\item
\textit{Fixed transmission rate}: LUs transmit their information towards their BS at a fixed rate. In this scenario, we find the optimum values for the wiretap code rates, taking into account the reliability outage probability that occurs when the selected fixed rate exceeds the instantaneous channel capacity.
\item
\textit{Adaptive transmission rate}: the BS enforces an adaptive secure transmission from LUs assuming a perfect channel estimation. In this scenario, we find the optimum value of the redundancy rate, $R_e$, that maximizes the secrecy performance.
\end{itemize}

\section{Analysis of the SINR distributions}
First, we analyze the connection related statistics of this scenario using a stochastic geometry approach. We assume that the BS applies SIC to detect the UL transmission from the nearest user first, and afterwards, it continues decoding the information from other users up to user $N$. The received instantaneous SINR at the BS of the $kth$ user can be written as:
\begin{equation}
\label{gammak}
{\gamma _k} = \frac{{{H_k}R_k^{ - \alpha }}}{{I  + {1 \mathord{\left/
 {\vphantom {1 \rho_b }} \right.
 \kern-\nulldelimiterspace} \rho_b }}}
\end{equation}
\noindent
where $I=\sum\nolimits_{j = k + 1}^N {{H_j}R_j^{ - \alpha }}$ represents the intra-cluster interference due to other NOMA users; $\rho_b$ represents the transmit signal-to-noise ratio (SNR) defined as $\rho_b  = \frac{{{P_T}}}{{{\sigma _b^2}}}$, being $P_T$ transmit power at the user terminal and $\sigma_b^2$ the additive white Gaussian noise (AWGN) power received at the BS. Note that (\ref{gammak}) represents the SINR associated with the decoding process of the message from user $k$ subject to the correct decoding process from previous NOMA users (from user 1 to $k-1$) so that their intra-cluster interference has been successfully canceled. Also note that the SINR expression for the last user is simplified to ${\gamma _N} = \rho_b H_N{R_N^{ - \alpha }}$ since the intra-cluster interference has been completely canceled.  

\subsection{Distribution of the SINR of Legitimate Users}
In this section we compute the coverage probability of the legitimate users, i.e. the complementary cumulative distribution function (ccdf) of their received SINR at the BS, which represents the probability for a user to have a SINR higher than a given threshold $t$. 

\begin{lemma}
{ In the case of $M$ antennas at the BS, the ccdf of the SINR for the $kth$ user, $p_k(t)$, is given by
\begin{align}
\label{pk_MRC_final}
\bar{F}_{\gamma_k}(t) &=
 \int_0^{{r_c}} {{{\rm{e}}^{ - {{\psi} \mathord{\left/
 {\vphantom {{\psi} {{\rho _b}}}} \right.
 \kern-\nulldelimiterspace} {{\rho _b}}}}}} \sum\limits_{r = 0}^{M - 1} {\sum\limits_{k = 0}^r {\frac{{{{ \psi }^r}{{\left( { - 1} \right)}^{k}}}}{{(r - k)!k!\rho _b^{r - k}}}} } \frac{{{{\rm{d}}^k}}}{{{\rm{d}}{s^k}}}{{\cal L}_{I|{r_k}}}\left( s \right)|_{s=\psi} \nonumber \\ 
 & \times\frac{2}{{{r_c}}}\frac{{\Gamma \left( {k + \frac{1}{2}} \right)\Gamma \left( {N + 1} \right)}}{{\Gamma \left( k \right)\Gamma \left( {N + \frac{3}{2}} \right)}}\beta \left( {\frac{{r_k^2}}{{r_c^2}};k + \frac{1}{2},N - k + 1} \right){\rm{d}}{r_k}
\end{align}
\noindent
with $\psi=t r_k^\alpha$ and
\begin{align}
\label{Li}
\mathcal{L}_{I|{r_k}}&(s) = {\left( {\frac{{2\left( {r_c^{\alpha + 2}{{}_2{F_1}}\left[ {1,\frac{{\alpha + 2}}{\alpha },2 + \frac{2}{\alpha },-\frac{r_c^\alpha}{t r_k^\alpha}
} \right] - r_k^{\alpha + 2}{{}_2{F_1}}\left[ {1,\frac{{\alpha + 2}}{\alpha },2 + \frac{2}{\alpha },-\frac{1}{t} } \right] } \right)}}{{tr_k^\alpha \left( {r_c^2 - r_k^2} \right)\left( {\alpha + 2} \right)}}} \right)^{N - k}} 
\end{align}}
\noindent where $_2F_1(\cdot,\cdot,\cdot,\cdot)$ is the Gauss hypergeometric function defined in \cite{Abramowitz65} (Ch. 15), $\Gamma(z)=\int_{0}^{\infty} t^{z-1} \mathrm{e}^{-t} \mathrm{d}t$ stands for the Euler Gamma function, $\beta(x; a, b)$ is the beta density function defined as $\beta(x; a, b) = (1/B(a, b))x^{a-1}(1-x)^{b-1}$, being $B(a, b)$ the beta function, which is expressible in terms of Gamma functions as $B(a, b) = \Gamma(a)\Gamma(b)/\Gamma(a + b)$. Note that (\ref{pk_MRC_final}) just includes one finite integral, which can be also computed by the Gaussian-Chebyshev quadrature relationship \cite{Hildebrand87}. 
\end{lemma}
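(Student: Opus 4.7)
The plan is to peel the randomness in (\ref{gammak}) from the inside out: first the desired-user fading $H_k$, then the aggregate interference $I$, and finally the ordered distance $R_k$. Conditioning on $R_k=r_k$ and on a realization of $I$, and using that $H_k \sim \mathrm{Gamma}(M,1)$ by virtue of MRC at the BS, identity (\ref{gammaccdf}) applied at the argument $tr_k^\alpha(I+1/\rho_b)$ gives
\begin{equation*}
\mathbb{P}\bigl(\gamma_k>t\mid r_k,I\bigr) = \mathrm{e}^{-\psi(I+1/\rho_b)}\sum_{r=0}^{M-1}\frac{\psi^{r}(I+1/\rho_b)^{r}}{r!},\qquad \psi = tr_k^\alpha.
\end{equation*}

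Next I would expand $(I+1/\rho_b)^r$ via the binomial theorem, pull the deterministic factor $\mathrm{e}^{-\psi/\rho_b}$ outside the expectation, and take expectation over $I$. This reduces everything to moments of the form $\mathbb{E}[I^{k}\mathrm{e}^{-\psi I}\mid r_k]=(-1)^k\frac{\mathrm{d}^k}{\mathrm{d}s^k}\mathcal{L}_{I|r_k}(s)|_{s=\psi}$, the standard Laplace-derivative identity. Collecting the binomial coefficient $\binom{r}{k}/r!=1/(k!(r-k)!)$, the surviving $1/\rho_b^{r-k}$, and the $(-1)^k$ from the Laplace-derivative formula produces the inner double sum in (\ref{pk_MRC_final}); note that the running index $k$ there denotes the order of the derivative, not the user label, as the author has silently overloaded the symbol.

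The bulk of the work is the closed-form evaluation of $\mathcal{L}_{I|r_k}(s)$. Given the $k$th order statistic $R_k=r_k$ of a BPP of $N$ uniform points in $B(0,r_c)$, the upper order statistics $R_{k+1},\ldots,R_N$ are exchangeable and, inside the symmetric product defining $I$, may be treated as $N-k$ i.i.d.\ samples with radial pdf $2r/(r_c^2-r_k^2)$ on $[r_k,r_c]$. The effective interferer fading after MRC matched to user $k$ remains unit-mean exponential (it is a one-dimensional projection of a circularly symmetric complex Gaussian vector), hence $\mathbb{E}_H[\mathrm{e}^{-sHr^{-\alpha}}]=r^{\alpha}/(r^{\alpha}+s)$, and independence across users gives
\begin{equation*}
\mathcal{L}_{I|r_k}(s) = \left(\frac{2}{r_c^2-r_k^2}\int_{r_k}^{r_c}\frac{r^{\alpha+1}}{r^{\alpha}+s}\,\mathrm{d}r\right)^{N-k}.
\end{equation*}
Setting $s=\psi=tr_k^\alpha$ and applying the Gauss-hypergeometric antiderivative $\int x^{\alpha+1}/(x^{\alpha}+a)\,\mathrm{d}x = [x^{\alpha+2}/((\alpha+2)a)]\,{}_2F_1[1,(\alpha+2)/\alpha,2+2/\alpha,-x^{\alpha}/a]$ at the two endpoints $r_c$ and $r_k$ delivers (\ref{Li}) directly.

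Finally, the deconditioning over $R_k$ is routine: since the squared normalized distance of a single LU is uniform on $[0,1]$, the $k$th order statistic satisfies $R_k^2/r_c^2 \sim \mathrm{Beta}(k,N-k+1)$, and a $r_k\mapsto r_k^2$ Jacobian change of variable rewrites the resulting density in exactly the Gamma-ratio-times-Beta form displayed in the last line of (\ref{pk_MRC_final}). Multiplying by the conditional ccdf obtained above and integrating $r_k$ over $[0,r_c]$ produces (\ref{pk_MRC_final}). The main technical obstacle is step three: recognizing the hypergeometric antiderivative for the radial integral and book-keeping the sign conventions so that the hypergeometric arguments $-r_c^{\alpha}/(tr_k^{\alpha})$ and $-1/t$ emerge with the correct signs; the binomial–Laplace-derivative manipulation and the order-statistic density are mechanical by comparison.
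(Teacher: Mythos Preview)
Your proposal is correct and follows essentially the same route as the paper: condition on $r_k$ and $I$, invoke the Gamma ccdf of $H_k$, expand $(I+1/\rho_b)^r$ binomially and convert the $I^k e^{-\psi I}$ moments into derivatives of the conditional Laplace transform, evaluate $\mathcal{L}_{I|r_k}$ by treating the $N-k$ upper interferers as i.i.d.\ on the annulus $[r_k,r_c]$ with exponential fading, and then decondition with the order-statistic density of $R_k$. The only cosmetic differences are that the paper cites \cite{Srinivasa2010} for $f_{R_k}$ in its $\beta(\cdot;k+\tfrac12,N-k+1)$ parametrization (which, as you note, is just the $\mathrm{Beta}(k,N-k+1)$ law for $R_k^2/r_c^2$ pushed through the Jacobian) and appeals to \cite{Romero08,Shah2000} for the fact that interferer fading stays unit-exponential after MRC, whereas you justify both points directly.
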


\begin{proof}
See Appendix \ref{Appendix_pk}.
\end{proof}

\begin{corollary}
{ In the case of single antenna $(M=1)$ at the BS, the ccdf of the SINR for the $kth$ user, $p_k(t)$, is simplified to}
\begin{align}
\label{pkfinal}
\bar{F}_{\gamma_k}(t) = \int_0^{{r_c}} &{{{\rm{e}}^{ - tr_k^\alpha /\rho_b }}} 
{\left( {\frac{{2\left( {r_c^{\alpha + 2}{{}_2{F_1}}\left[ {1,\frac{{\alpha + 2}}{\alpha },2 + \frac{2}{\alpha },-\frac{r_c^\alpha}{t r_k^\alpha}
} \right] - r_k^{\alpha + 2}{{}_2{F_1}}\left[ {1,\frac{{\alpha + 2}}{\alpha },2 + \frac{2}{\alpha },-\frac{1}{t} } \right] } \right)}}{{tr_k^\alpha \left( {r_c^2 - r_k^2} \right)\left( {\alpha + 2} \right)}}} \right)^{N - k}}
\nonumber \\ & \times
\frac{2}{{{r_c}}}\frac{{\Gamma \left( {k + \frac{1}{2}} \right)\Gamma \left( {N + 1} \right)}}{{\Gamma \left( k \right)\Gamma \left( {N + \frac{3}{2}} \right)}}\beta \left( {\frac{{r_k^2}}{{r_c^2}};k + \frac{1}{2},N - k + 1} \right){\rm{d}}{r_k}
\end{align}
\end{corollary}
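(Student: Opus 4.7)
The plan is to obtain the corollary as a direct specialization of Lemma 1 by setting $M=1$, since the only difference between the single-antenna and multi-antenna settings lies in the distribution of the desired signal's fading coefficient, which becomes exponential (rather than Gamma-distributed) when $M=1$. Conceptually, with $M=1$ the ccdf in \eqref{gammaccdf} collapses to $\overline{F}_{H_k}(1,x) = e^{-x}$, so the Laplace-transform technique used in the proof of Lemma 1 no longer requires evaluating derivatives of $\mathcal{L}_{I|r_k}(s)$ of order higher than zero.

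Concretely, I would begin from \eqref{pk_MRC_final} and substitute $M=1$. The outer summation $\sum_{r=0}^{M-1}$ retains only the $r=0$ term, and for $r=0$ the inner summation $\sum_{k=0}^{r}$ retains only the $k=0$ term (here I would note the standard comment that the summation index clashes notationally with the user index, and has been recycled only inside the Lemma). The surviving coefficient is
\begin{equation*}
\frac{\psi^{0}(-1)^{0}}{(0-0)!\,0!\,\rho_b^{\,0-0}} = 1,
\end{equation*}
and the zero-th derivative of the conditional Laplace transform reduces to $\mathcal{L}_{I|r_k}(s)|_{s=\psi}=\mathcal{L}_{I|r_k}(\psi)$, with $\psi = t r_k^{\alpha}$ as defined in Lemma 1.

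Finally, I would insert the closed-form expression for $\mathcal{L}_{I|r_k}(\psi)$ given by \eqref{Li}, which is already written in Lemma 1 with its argument rewritten in terms of $t$ and $r_k^{\alpha}$. Multiplying by the exponential factor $e^{-tr_k^{\alpha}/\rho_b}$ (which comes from $e^{-\psi/\rho_b}$) and by the marginal density of the ordered distance $R_k$ in the BPP (the last line of \eqref{pk_MRC_final}) reproduces \eqref{pkfinal} verbatim. There is no real obstacle here beyond careful bookkeeping of the two uses of the symbol $k$; the corollary is genuinely a one-line specialization of the lemma once the collapse of the double sum is observed.
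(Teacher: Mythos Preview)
Your proposal is correct and mirrors exactly what the paper does: the corollary is stated immediately after Lemma~1 with no separate proof, because setting $M=1$ in \eqref{pk_MRC_final} collapses the double sum to the single $r=0$, $k=0$ term and leaves precisely $e^{-tr_k^\alpha/\rho_b}\,\mathcal{L}_{I|r_k}(tr_k^\alpha)\,f_{R_k}(r_k)$ under the integral. Your remark about the clash between the summation index $k$ and the user index $k$ is a fair notational caveat, but otherwise this is indeed a one-line specialization.
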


\begin{corollary}
{ In the case of single antenna $(M=1)$ at the BS,} the coverage probability for the farthest user ($N$) is simplified to 
\begin{align}
\bar{F}_{\gamma_N}(t)& = \frac{{2N}}{{\alpha r_c^{2N}}}{\left( {\frac{t}{\rho_b }} \right)^{ - \frac{{2N}}{\alpha }}}\left[ {\Gamma \left( {\frac{{2N}}{\alpha }} \right) - \Gamma \left( {\frac{{2N}}{\alpha },\frac{{r_c^\alpha t}}{\rho_b }} \right)} \right]
\end{align}
where $\Gamma(\cdot,\cdot)$ stands for the upper incomplete Gamma function.
\end{corollary}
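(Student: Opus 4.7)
The plan is to specialize the general framework to user $N$, for whom the intra-cluster interference vanishes, and then exploit the closed form of the single-antenna fading distribution together with the known distance distribution of the farthest point of a uniform BPP in the disk.

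First, I would start from the simplified SINR expression already given in the text, $\gamma_N=\rho_b H_N R_N^{-\alpha}$, which follows because after SIC of users $1,\dots,N-1$ the interference term $I$ is zero. For $M=1$, the fading coefficient $H_N$ is unit-mean exponential, so conditioning on $R_N=r$ yields
\begin{equation*}
\mathbb{P}(\gamma_N>t\mid R_N=r)=\mathbb{P}\!\left(H_N>\tfrac{t\,r^{\alpha}}{\rho_b}\right)=\exp\!\left(-\tfrac{t\,r^{\alpha}}{\rho_b}\right),
\end{equation*}
so the unconditional ccdf is $\bar F_{\gamma_N}(t)=\mathbb{E}\bigl[e^{-tR_N^{\alpha}/\rho_b}\bigr]$.

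Second, I would derive the density of $R_N$, the largest of $N$ i.i.d.\ distances drawn uniformly in the disk $B(0,r_c)$. A single uniform point has cdf $r^2/r_c^2$ on $[0,r_c]$, and since $R_N=\max_i R_i$ under the BPP assumption, its cdf is $(r^2/r_c^2)^N$ and its pdf is
\begin{equation*}
f_{R_N}(r)=\frac{2N\,r^{2N-1}}{r_c^{2N}},\qquad 0\le r\le r_c.
\end{equation*}
Substituting into the expectation gives the single integral
\begin{equation*}
\bar F_{\gamma_N}(t)=\frac{2N}{r_c^{2N}}\int_0^{r_c}r^{2N-1}\exp\!\left(-\tfrac{t\,r^{\alpha}}{\rho_b}\right)\mathrm{d}r.
\end{equation*}

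Third, I would perform the change of variable $u=tr^{\alpha}/\rho_b$ (so $r=(u\rho_b/t)^{1/\alpha}$ and $\mathrm{d}r=(1/\alpha)(u\rho_b/t)^{1/\alpha-1}(\rho_b/t)\mathrm{d}u$), under which the upper limit becomes $tr_c^{\alpha}/\rho_b$ and the exponents of $u$ combine to $2N/\alpha-1$. This collapses the integral into a lower incomplete Gamma function,
\begin{equation*}
\bar F_{\gamma_N}(t)=\frac{2N}{\alpha\,r_c^{2N}}\!\left(\tfrac{t}{\rho_b}\right)^{\!-2N/\alpha}\!\!\gamma\!\left(\tfrac{2N}{\alpha},\tfrac{r_c^{\alpha}t}{\rho_b}\right),
\end{equation*}
and the identity $\gamma(s,x)=\Gamma(s)-\Gamma(s,x)$ yields the claimed expression.

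The computation is essentially routine once the distance distribution is identified, so there is no real obstacle; the only point that deserves care is justifying the distribution of $R_N$ under the BPP model (in particular that the farthest distance really has cdf $(r^2/r_c^2)^N$, which relies on the uniformity of the $N$ LUs inside $B(0,r_c)$ and their independence). The subsequent variable change and recognition of the incomplete Gamma form are mechanical.
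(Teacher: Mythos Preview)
Your proposal is correct and follows essentially the same approach as the paper: observe that user $N$ has no residual interference, condition on $R_N$ using the exponential ccdf of $H_N$, plug in the density of the maximum order statistic $f_{R_N}(r)=2Nr^{2N-1}/r_c^{2N}$, and reduce the resulting integral to an incomplete Gamma via the substitution $u=tr^\alpha/\rho_b$. The paper merely writes the integral and says ``after minor manipulations,'' whereas you spell out the change of variable explicitly; otherwise the two arguments are identical.
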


\begin{proof}
The farthest user ($N$) experiences no intra-cluster interference, so its coverage probability can be expressed as
\begin{align}
\bar{F}_{\gamma_N}(t)& = \int_0^{{r_c}} {{{\rm{e}}^{ - tr_N^\alpha /\rho_b }}} {f_{{R_N}}}({r_N}){\rm{d}}{r_N} \nonumber \\ & = \int_0^{{r_c}} {{{\rm{e}}^{ - tr_N^\alpha /\rho_b }}} \frac{{2N}}{{{r_c}}}\left( {\frac{{r_N^2}}{{r_c^2}}} \right){\rm{d}}{r_N}
\end{align}
After minor manipulations, the proof is complete.
\end{proof}

\subsection{Distribution of the SNR of eavesdroppers}
We address worst-case scenario, in which eavesdroppers are assumed to have strong detection capabilities. Specifically, by applying multi-user detection techniques, the multi-user data stream received at the BS can be also distinguished by the eavesdroppers. 

We consider the most detrimental eavesdropper, which is not necessarily the nearest one, but the one having the best channel to the LU that is transmitting towards the BS. Therefore, the instantaneous received SNR at the most detrimental eavesdropper (with respect with any LU) can be expressed as follows:
\begin{equation}
{\gamma _{{e}}} = \mathop {\max }\limits_{e \in {\Phi _e}} \left\{ {\rho_e}{H_e}{R_e^{-\alpha }} \right\}
\end{equation}
\noindent 
where $\rho_e$ represents the transmit SNR defined as $\rho_e  = \frac{{{P_T}}}{{{\sigma _e^2}}}$, being $P_T$ transmit power at the LU and $\sigma_e^2$ the AWGN power received at the eavesdropper.

\begin{lemma}
Assuming an eavesdropper-exclusion zone or radius $r_p$ around the LUs, the cdf of the SNR for the most detrimental eavesdropper can be computed as follows:
\begin{align}
\label{Fe}
{F_{{\gamma _e}}}&(t)
{ = \exp \left[ { - \frac{{2\pi {\lambda _e}\Gamma \left( {{{2}}/{\alpha },r_p^\alpha t /{\rho_e}} \right)}}{{\alpha {{(t/{\rho_e})}^{2/\alpha }}}}} \right]}
\end{align}
\end{lemma}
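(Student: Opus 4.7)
My plan is to express the CDF as a probability generating functional (PGFL) of the PPP $\Phi_e$ and then evaluate a single radial integral in closed form. The starting observation is that $\gamma_e \le t$ is equivalent to $\rho_e H_e R_e^{-\alpha} \le t$ holding simultaneously for every eavesdropper $e \in \Phi_e$. Since the fading coefficients $\{H_e\}$ are unit-mean exponential (Rayleigh power) random variables, mutually independent and independent of $\Phi_e$, I would condition on the point process and write
\begin{equation*}
F_{\gamma_e}(t) = \mathbb{E}_{\Phi_e}\!\left[\prod_{e \in \Phi_e}\left(1 - e^{-tR_e^{\alpha}/\rho_e}\right)\right].
\end{equation*}

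Next, I would center polar coordinates on the reference legitimate user; the exclusion zone removes the disk $B(0,r_p)$ and leaves a homogeneous PPP of intensity $\lambda_e$ on $\mathbb{R}^{2}\setminus B(0,r_p)$. Applying the PGFL with $1-f(x) = e^{-t\|x\|^{\alpha}/\rho_e}$ and passing to polar coordinates yields
\begin{equation*}
F_{\gamma_e}(t) = \exp\!\left(-2\pi\lambda_e\int_{r_p}^{\infty} e^{-tr^{\alpha}/\rho_e}\, r\, dr\right).
\end{equation*}

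The final step is the closed-form evaluation of the radial integral. The substitution $u = tr^{\alpha}/\rho_e$ transforms it into $\tfrac{1}{\alpha}(\rho_e/t)^{2/\alpha}\int_{tr_p^{\alpha}/\rho_e}^{\infty} u^{2/\alpha-1}e^{-u}\, du$, which by definition equals $\tfrac{1}{\alpha}(\rho_e/t)^{2/\alpha}\,\Gamma(2/\alpha,\, tr_p^{\alpha}/\rho_e)$, where $\Gamma(\cdot,\cdot)$ denotes the upper incomplete Gamma function. Substituting back into the exponent reproduces the stated expression. I do not expect a genuine obstacle here: the probabilistic part is a textbook PGFL computation made exact by the exponential form of $H_e$, and the integral evaluation only requires careful bookkeeping of the exponent $2/\alpha$ and the lower limit $tr_p^{\alpha}/\rho_e$ arising from the exclusion zone.
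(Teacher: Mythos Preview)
Your proposal is correct and follows essentially the same route as the paper: express the CDF as a product over $\Phi_e$ of the per-eavesdropper fading CDFs, apply the PGFL to obtain $\exp\bigl(-2\pi\lambda_e\int_{r_p}^{\infty} r\,e^{-tr^{\alpha}/\rho_e}\,dr\bigr)$, and then evaluate the radial integral in closed form. In fact you are slightly more explicit than the paper, which simply states ``solving the last integral, the proof is complete'' without spelling out the substitution leading to the upper incomplete Gamma function.
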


\begin{proof}
Taking into account that EDs follow a PPP distribution, we can express the cdf of the SNR for the most detrimental eavesdropper as follows:

 \begin{align}
{F_{{\gamma _e}}}&(t) = 1 - p_e(t)= {E_{{\Phi _e}}}\left\{ {\prod\limits_{e \in {\Phi _e}} {{F_{{H_e}}}\left( {t{r_e^\alpha} /{\rho_e}} \right)} } \right\}
\nonumber \\&
{  \overset{(a)}{=} \exp \left[ { - {\lambda _e}\int_{{R^2}} {\left( {1 - {F_{{H_e}}}\left( {t{r_e^\alpha} / {\rho_e}} \right)} \right){r_e}d{r_e}} } \right]}
\nonumber \\&
{ = \exp \left[ { - 2\pi {\lambda _e}\int_{{r_p}}^\infty  {{r_e}{e^{ - t{r_e^\alpha} /{\rho_e}}}d{r_e}} } \right]} 
\end{align}
\noindent
where $(a)$ comes from the Probability Generating Functional (PGFL) \cite{Chiu13}. Solving the last integral, the proof is complete.
\end{proof}

 In the particular case of no eavesdropper-exclusion zone, (\ref{Fe}) is simplified to:
\begin{equation}
\label{FeRp0}
{\left. {{F_{{\gamma _e}}}(t)} \right|_{{r_p} = 0}} = \exp \left[ { - \frac{{2\pi {\lambda _e}\Gamma \left( {{{2}}/{\alpha }} \right)}}{{\alpha {{(t/{\rho_e})}^{2/\alpha }}}}} \right]
\end{equation}

\section{Secrecy rate metrics}
Let $R_s$ be the secrecy rate in a legitimate link, i.e. the rate of transmitted confidential information. This rate can be computed as:
\begin{equation}
R_s \triangleq R_b - R_e \ge 0,
\end{equation}
\noindent
where $R_b$ represents the codeword rate from the LU to the BS, i.e. rate at which the codeword is transmitted, including the confidential message and redundancy; $R_e$ quantifies the redundancy rate, i.e. rate associated with redundant information for providing physical layer security in the message transmission. Roughly, a larger $R_e$ provides a higher secrecy level. 

On the one hand, if we select a codeword rate such that $R_b \le C_b$ (being $C_b$ the capacity of the legitimate channel), a reliability constraint is ensured. On the other hand, if the redundancy rate is above the capacity of the eavesdropper's channel, i.e. $R_e > C_e$, a secrecy constraint is achieved. 

Depending on whether the CSI of LU and ED links are available at the BS, such rates can be adapted to the channel or not. Most of previous works on physical layer security compute the secrecy capacity as {  ${C_s} = {\left[ {{C_b} - {C_e}} \right]^ + }$, where ${\left[ x \right]^ + } = \max \left\{ {0,x} \right\}$ } \cite{Barros2006}, although this definition implicitly requires that both $C_b$ and $C_e$ are available. In our scenario, this assumption is not realistic since EDs are not part of the cellular system. Subsequently, we do not use the typical information-theoretic formulation related to the secrecy capacity but a recent formulation of a new metric, referred to as the \textit{effective secrecy throughput} (EST) \cite{Yan15}, which captures both the reliability constraint and the secrecy constraint as independent terms. The EST of a wiretap channel quantifies the average secrecy rate at which the messages are transmitted from the LUs to the BS without being leaked to the eavesdroppers, and can be defined as 
\begin{equation}
\label{Phi}
\Phi(R_b, R_e) = (R_b - R_e)\left[1-\O_r(R_b)\right]\left[1-\O_s(R_e)\right]
\end{equation}

\noindent
where the term $(R_b-R_e)$ represents the rate of transmitted confidential information, i.e. $R_s$; and the term $[1-\O_r(R_b)]$ $[1-\O_s(R_e)]$ quantifies the probability that the information is securely transmitted from the LUs to the BSs, being  $[1-\O_r(R_b)]$ associated with the reliability constraint and $[1-\O_r(R_e)]$ associated with the secrecy constraint. We assume a normalized bandwidth $W=1$, and therefore, secrecy rate and capacity metrics are measured in bits/s.

{ We have chosen the EST as a secrecy performance metric in this paper as it allows for explicitly designing the wiretap code rates that satisfy certain reliability and secrecy constraints. This is not the case when using conventional secrecy metrics such as the Secrecy Outage Probability (SOP) ${\P(C_s<R_s)}$ (where $R_s$ is defined as the threshold rate under which secure communication cannot be achieved) or the probability of strictly positive secrecy capacity ${\P(C_s>0)}$. Besides, and despite being a relatively recent performance metric, the EST has been used in numerous recent works \cite{Jiang18,Yan15, Monteiro15,Yu17,Wang16,Yang17,Wang17,Jiang17,Monteiro18,Chen18b}. Additionally, the evaluation of the SOP poses an additional challenge from an analytical perspective in this specific scenario, since it includes an additional infinite integral.}

\subsection{Adaptive Transmission Rate}
In this scenario, the BS enforces an adaptive transmission scheme from LUs in the UL. 

\begin{theorem}
\label{Theo_ESTadapt}
The EST for the NOMA $kth$ user in case of adaptive transmission is given by
{ 
\begin{align}
\label{PhikAdapt}
{\Phi _k}({R_e}) = \left( {\frac{1}{\rm{ln}\,2} \int_{2^{R_e} - 1}^\infty  {\frac{{{{\bar{F}}_{{\gamma _k}}}(z)}}{{1 + z}}{\rm{d}}z}  - {{{{F}}_{{\gamma _k}}}(2^{R_e}-1)}{R_e}} \right){F_{{\gamma _e}}}\left( {{2^{R_e} - 1}} \right)
\end{align}
}
\noindent
where $\bar{F}_{\gamma _k}(\cdot)$ and ${F}_{\gamma _e}(\cdot)$ were given in (\ref{pkfinal}) and (\ref{Fe}), respectively.
\end{theorem}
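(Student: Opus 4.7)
The plan is to substitute the adaptive rate choice into the EST definition (\ref{Phi}) and reduce the resulting expectation to a single integral by a change of variables followed by integration by parts. First I would isolate the eavesdropper-side factor: the secrecy-constraint term $[1-\mathcal{O}_s(R_e)]$ depends only on the ED link and, using the cdf derived in Lemma 2, equals $\mathbb{P}(C_e<R_e)=F_{\gamma_e}(2^{R_e}-1)$. That immediately provides the outer multiplicative factor in (\ref{PhikAdapt}), so what remains is to compute the expected value of $(R_b-R_e)[1-\mathcal{O}_r(R_b)]$ over the randomness of $\gamma_k$.

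Next, I would formalise the adaptive convention at the legitimate link. The LU adapts the codeword rate to the instantaneous capacity, setting $R_b=C_b=\log_2(1+\gamma_k)$ whenever $\gamma_k>2^{R_e}-1$ (so that $R_b>R_e$, the secrecy rate is positive and the reliability-outage complement equals $1$), and otherwise does not transmit, which I encode as $R_b=0$. Under this convention the inner expectation collapses to $\mathbb{E}[R_b]-R_e$, where
\begin{align*}
\mathbb{E}[R_b] = \int_{2^{R_e}-1}^{\infty}\log_2(1+z)\,f_{\gamma_k}(z)\,\mathrm{d}z.
\end{align*}

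The main computational step is a single integration by parts on this integral. Taking $u=\log_2(1+z)$ and $\mathrm{d}v=f_{\gamma_k}(z)\,\mathrm{d}z$ with antiderivative $v=-\bar{F}_{\gamma_k}(z)$ (chosen so that the upper boundary term vanishes, which it does by the tail decay of $\bar{F}_{\gamma_k}$ inherited from Lemma 1), the lower boundary contributes $R_e\bar{F}_{\gamma_k}(2^{R_e}-1)$ and the residual integral becomes $\tfrac{1}{\ln 2}\int_{2^{R_e}-1}^{\infty}\tfrac{\bar{F}_{\gamma_k}(z)}{1+z}\,\mathrm{d}z$. Combining with the $-R_e$ and invoking $1-\bar{F}_{\gamma_k}(2^{R_e}-1)=F_{\gamma_k}(2^{R_e}-1)$ produces exactly the parenthesised expression in (\ref{PhikAdapt}); multiplying by the previously isolated secrecy factor then closes the argument.

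The main subtlety is not the algebra but fixing the correct adaptive convention: the extra $-R_e F_{\gamma_k}(2^{R_e}-1)$ term does \emph{not} arise from the more naive $\mathbb{E}[(C_b-R_e)^+]$ computation, and is produced precisely by charging the redundancy rate $R_e$ as overhead on every frame (including those on which the channel cannot support secure transmission and no codeword is actually sent). Once that convention is adopted, the remainder of the proof is routine integration by parts together with the tail bound on $\bar{F}_{\gamma_k}$ needed to discard the boundary contribution at infinity.
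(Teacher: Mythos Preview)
Your proposal is correct and follows essentially the same approach as the paper: isolate the secrecy-constraint factor $F_{\gamma_e}(2^{R_e}-1)$ from Lemma~2, express the remaining term as $C_k-R_e$ with $C_k=\int_{2^{R_e}-1}^{\infty}\log_2(1+z)f_{\gamma_k}(z)\,\mathrm{d}z$, and then apply integration by parts with $u=\log_2(1+z)$ and $v=-\bar{F}_{\gamma_k}(z)$. Your additional discussion of the adaptive convention (charging $R_e$ on every frame) and the explicit justification of the vanishing boundary term at infinity are more careful than the paper's own treatment, but the technical route is identical.
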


\begin{proof}
In case of adaptive transmission, $R_b$ can be optimally chosen as \mbox{$R_b = C_b$}, and hence, the reliability constraint can be always guaranteed, i.e. the reliability outage probability is zero: $\O_r(R_b)=0$. Therefore, the EST for the NOMA $kth$ user can be defined as
\begin{equation}
\label{Phik}
{\Phi _k}(R_e) = ({ C _k} - R_e)\left[1-\O_s(R_e)\right]
\end{equation}
\noindent
where the term $C _k$ represents the ergodic capacity for the $kth$ user. Note that, in the adaptive transmission scheme, $R_e$ is adjusted within the constraint $0 < R_e < C _k$. { Since $R_b=C_b$, we need to guarantee that \mbox{$C_b=log_2(1+\gamma_b)>R_e$}, that is, $\gamma_b>2^{R_e}-1$. Therefore, assuming a normalized channel bandwidth $W=1$, the average capacity for the $kth$ user, $C _k$, can be expressed as
\begin{align}
C _k  = \int_{2^{R_e}-1}^\infty  {{{\log }_2}\left( {1 + \gamma} \right){f_{{\gamma _k}}}(\gamma ){\rm{d}}\gamma}.
\end{align}

Using integration by parts with $u={\log}_2(1+\gamma)$, ${\rm{d}} v = f_{\gamma_k}(\gamma )$ and $v=-\left(1-F_{\gamma _k}(\gamma )\right)$, the average capacity can be also expressed as 
\begin{align}
\label{Ck}
 C _k ={{{\bar{F}}_{{\gamma _k}}}(2^{R_e}-1)}R_e + \frac{1}{\rm{ln}\,2} \int_{2^{R_e}-1}^\infty  {\frac{{{{\bar{F}}_{{\gamma _k}}}(z)}}{{1 + z}}{\rm{d}}z} 
\end{align}
\noindent 

The secrecy outage probability term %in (\ref{Phik}) 
can be computed as
\begin{align}
\label{Os}
\O_s(R_e) &= \mathbb{P}(R_e < C_e) = \mathbb{P}(\gamma_e > 2^{R_e}- 1) = 1- {F}_{\gamma_e}(2^{R_e}- 1).
\end{align}
}
Substituting (\ref{Os}) and (\ref{Ck}) into (\ref{Phik}), the proof is complete.
\end{proof}

\begin{remark}[Impact of eavesdroppers density, $\lambda_e$]
\label{remark_density}
In view of Theorem \ref{Theo_ESTadapt}, it can be deduced that, for $\lambda_e=0$, the term associated with the secrecy constraint, $\O_r(R_e)$, is null; hence, the EST is mainly determined by the capacity of the LU's link. On the other hand, the EST tends to zero as $\lambda_e$ grows since expression (\ref{PhikAdapt}) always satisfies that $\mathop {\lim }\limits_{{\lambda _e} \to \infty } {\Phi _k}\left( {{R_e}} \right) = 0,\; \forall {r_p} \in [0,\infty )$; { this is due to the fact that the fading distribution introduces a non-null probability of having a higher instantaneous capacity for the eavesdropper than for the legitimate user.}
\end{remark}

\begin{remark}[Impact of eavesdropper-exclusion radius, $r_p$]
\label{remark_rp}
In view of expression (\ref{PhikAdapt}), it can be noted that  the only term that depends on $r_p$ is the cdf of the SNR of the worst eavesdropper, ${F_{{\gamma _e}}}\left( {{2^{R_e} - 1}} \right)$; for $r_p=0$, this term is simplified to (\ref{FeRp0}), whereas for $r_p \rightarrow \infty$, this term satisfies that ${F_{{\gamma _e}}}\left( {{2^{R_e} - 1}} \right)|_{r_p \rightarrow \infty}=1$, that is, eavesdroppers do not have any impact on the EST performance.
\end{remark}

\subsection{Fixed Transmission Rate}
In case the LUs use a fixed transmission rate, the reliability constraint cannot be always guaranteed, i.e. a reliability outage must be taken into account as
\begin{align}
\label{Or}
\O_r(R_b) = \mathbb{P}(R_b > C_b) 
\end{align}

Therefore, an outage may occur whenever a message transmission is either unreliable or non secure.

Regarding the reliability constraint term, $\O_r(R_b)$, we address in our analysis the impact of imperfect SIC and detection probability for NOMA. Note that the signals from the intra-cluster interfering users may or may not be decoded perfectly; therefore, SIC may or may not be performed in a perfect fashion. As a consequence, we distinguish two cases: perfect and imperfect SIC.

The reliability constraint term for user $k$ in the case of perfect SIC, named as $p_k^{(P)}$, is given by 
\begin{align}
p_k^{(P)}(R_b) &= 1 - {\O_{{r_k}}}({R_b}){\rm{ }} = 1 - \mathbb{P}({R_b} > {C_k}) \nonumber \\ & = 1 - \mathbb{P}({\gamma _k} < {2^{R_b}} - 1) = {\bar{F}_{{\gamma _k}}}({2^{{R_b}}} - 1)
\end{align}
That is, the reliability constraint term represents the detection probability for user $k$, whose expression was obtained in (\ref{pkfinal}).

Finally, the EST for user $k$ in case of perfect SIC can be expressed as
\begin{align}
\Phi_k^{(P)} ({R_b},{R_e}) = ({R_b} - {R_e}){\bar{F}_{{\gamma _k}}}\left( {{2^{R_b} - 1}} \right){F_{{\gamma _e}}}\left( {{2^{R_e} - 1}} \right)
\end{align}

However, in the case of imperfect SIC, the intra-cluster interference experienced by the $kth$ user depends on whether the detection for the $k-1$ nearest users were successful or not, which complicates the model significantly. In this paper we assume the worst case of imperfect SIC, which considers that the decoding of the $kth$ user is always unsuccessful whenever the decoding of his relative $k-1$ closest users is unsuccessful \citep{Tabassum17}. Therefore, the reliability constraint term for the worst-case detection probability of $kth$ user is given by:
\begin{equation}
p_k^{(I)}(R_b) = \prod\limits_{i = 1}^{k} \bar{F}_{\gamma_i}\left(2^{R_b} -1\right)
\label{pkI}
\end{equation}

Finally, the EST for user $k$ in case of imperfect SIC can be expressed as
\begin{align}
\Phi_k^{(I)} ({R_b},{R_e}) = ({R_b} - {R_e})\prod\limits_{i = 1}^{k} \bar{F}_{\gamma_i}\left(2^{R_b} -1\right){F_{{\gamma _e}}}\left( {{2^{R_e} - 1}} \right)
\end{align}

\begin{table*} 
\centering 
\caption{Summary of secrecy metric expressions for different scenarios}
\renewcommand{\arraystretch}{1.4}
\begin{tabular}{|p{1.9cm}|p{2.2cm}|p{1.85cm}|p{9.1cm}|}
\hline %inserts horizontal line
\textsc{Scenario}&\textsc{Reliability constraint, $\left[1-\O_r(R_b)\right]$}&\textsc{Secrecy constraint, $\left[1-\O_s(R_e)\right]$}&\textsc{Effective Secrecy Throughput (EST), $\Phi_k$}\\[0.5ex] 
\hline 
Adaptive rate& 1 &${F_{{\gamma _e}}}\left( {{2^{R_e} - 1}} \right)$&${ {\Phi _k}({R_e}) = \left( {\frac{1}{\rm{ln}\,2} \int_{2^{R_e} - 1}^\infty  {\frac{{{{\bar{F}}_{{\gamma _k}}}(z)}}{{1 + z}}{\rm{d}}z}  - {{{{F}}_{{\gamma _k}}}(2^{R_e}-1)}{R_e}} \right){F_{{\gamma _e}}}\left( {{2^{R_e} - 1}} \right)}$
\\ [4pt] 
\hline 
Fixed rate with perfect SIC& ${\bar{F}_{{\gamma _k}}}({2^{{R_b}}} - 1)$ &${F_{{\gamma _e}}}\left( {{2^{R_e} - 1}} \right)$&$\Phi_k^{(P)} ({R_b},{R_e}) = ({R_b} - {R_e}){\bar{F}_{{\gamma _k}}}\left( {{2^{R_b} - 1}} \right){F_{{\gamma _e}}}\left( {{2^{R_e} - 1}} \right)$\\[4pt] 
\hline
Fixed rate with imperfect SIC& $\prod\limits_{i = 1}^{k} \bar{F}_{\gamma_i}\left(2^{R_b} -1\right)$&${F_{{\gamma _e}}}\left( {{2^{R_e} - 1}} \right)$&$\Phi_k^{(I)} ({R_b},{R_e}) = ({R_b} - {R_e})\prod\limits_{i = 1}^{k} \bar{F}_{\gamma_i}\left(2^{R_b} -1\right){F_{{\gamma _e}}}\left( {{2^{R_e} - 1}} \right)$\\ [4pt] 
\hline
\end{tabular}
\label{table1}
\end{table*}

Note that the EST expression is the same for the first NOMA user independently of the SIC assumption, i.e. $\Phi_1^{(P)}=\Phi_1^{(I)}$, since potential detection errors occur from the second user up to the $Nth$ user. 

A summary of secrecy metric expressions for different scenarios
in shown in Table \ref{table1}.

\section{Numerical Results}
In this section, analytical results are illustrated and validated with extensive Monte Carlo simulations in order to assess the physical layer security in UL NOMA. We conduct a thorough performance comparison between the adaptive and fixed rate transmission schemes in terms of EST. Main parameters are presented in Table I unless otherwise stated. 

\begin{table}
\caption{Main configuration parameters} 
\label{params}
\centering
\begin{tabular}{ c c }
\hline
Parameter & Value \\
\hline
$r_c$ (m)& 500  \\
$\alpha$ & 3.8  \\
$\rho_b$ (dB)& 110  \\
$\rho_e$ (dB)& 90  \\
$\lambda_e$ (points/m$^2$) & 1e-5  \\
\hline
\end{tabular}
\end{table}

\subsection{Fixed Transmission Rate}
In the case of fixed transmission rate, the reliability constraint term (or equivalently, the detection probability) plays an important role in NOMA performance. Let us analyze first this contribution separately.

Fig. \ref{CovProb_vs_k} shows the detection probability results for legitimate users with perfect SIC, $p_k^{(P)}$, and imperfect SIC, $p_k^{(I)}$. In this case, we have considered a high number of simultaneous NOMA LUs ($N=6$) randomly positioned according to a BPP in order to evaluate the performance as $k$ grows. In the case of perfect SIC, results show that detection probability is not a monotonically decreasing function with $k$ (i.e. with the distance from the $kth$ user to the BS); instead, farthest LUs are boosted since the intra-cluster interference term has been partially (or totally) canceled. Note that the best result is achieved for the farthest user, $k=N=6$, since perfect SIC assumes that intra-cluster interference is fully canceled. However, in the case of imperfect SIC, the intra-cluster interference experienced by the $kth$ user depends on whether the detection for $k-1$ nearer users were successful or not, thus providing a monotonically decreasing function with $k$. Note also that higher values of $R_b$ lead to a lower detection probability.

\begin{figure}[!ht]
\begin{center}
\includegraphics[width=0.7\columnwidth]{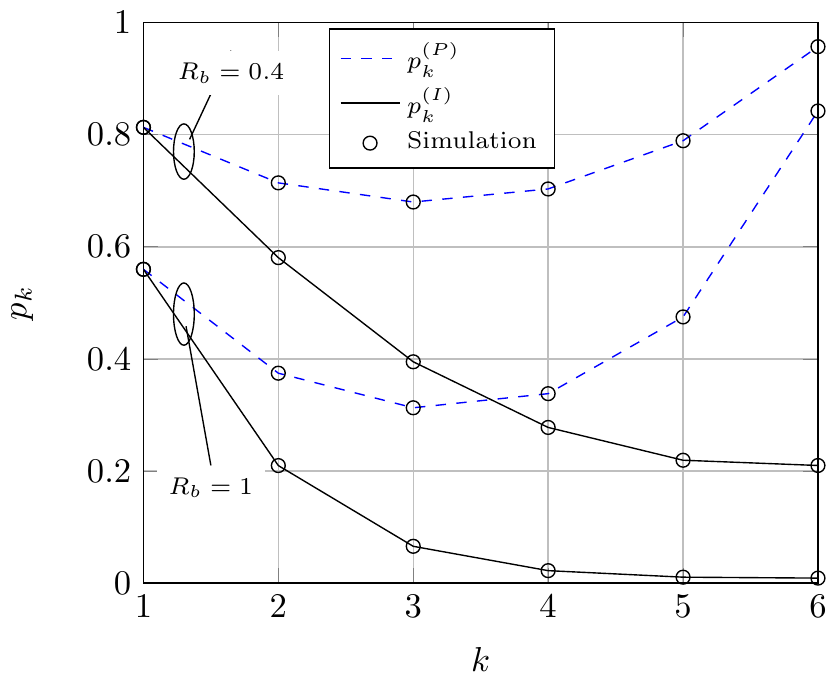}
\caption{Detection probability for LUs with perfect and imperfect SIC for each user $k$ with $N=6$.}
\label{CovProb_vs_k}
\end{center}
\end{figure}

Fig. \ref{CovProb_vs_t} shows the detection probability results for legitimate users with imperfect SIC, $p_k^{(I)}$, and for eavesdroppers, $p_e$, versus the SINR threshold $t=2^{R_b}-1$. Results for $p_k^{(I)}$ are obtained from (\ref{pkI}) considering $N=4$ NOMA LUs. The detection probability of eavesdroppers, $p_e$, is also shown for different values of the exclusion area radius, $r_p$. Since we consider the most detrimental eavesdropper, i.e. the one receiving the best channel quality from the LU, the detection probability results for the eavesdroppers may outperform the results for LUs as $r_p$ is decreased, assuming a density of eavesdroppers of $\lambda_e=1$ (default value). This undesirable scenario can be compensated by increasing the exclusion area radius. { . Results also show the detection probability for legitimate users in case of different number of antennas at the BS, leading to an important improvement as $M$ grows.}

\begin{figure}[!ht]
\begin{center}
\includegraphics[width=0.7\columnwidth]{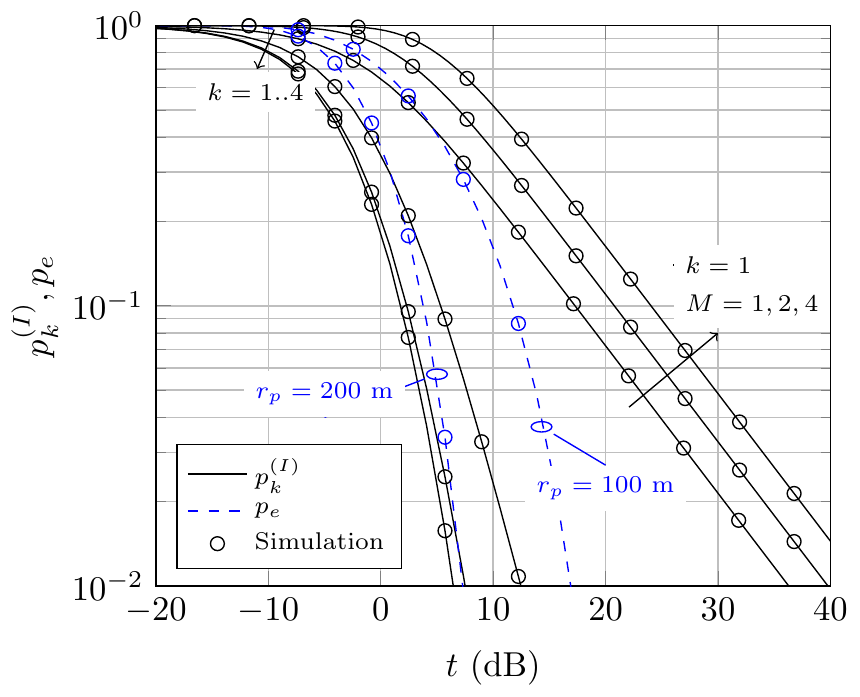}
\caption{Detection probability for LUs with imperfect SIC, $p_k^{(I)}$, and for eavesdroppers, $p_e$, versus $t=2^{R_b}-1$, with $N=4$ {  and different number of antennas $M$.}}
\label{CovProb_vs_t}
\end{center}
\end{figure}

Fig. \ref{ESTfixedN2k1} shows the EST for fixed rate transmission scheme and perfect SIC versus $R_b$ and $R_e$, $\Phi_k^{(P)}(R_b, R_e)$. We observe that there is a unique pair of $R_b$ and $R_e$ that maximizes the EST. Also note that EST is null for $R_e \le R_b$.

\begin{figure}[!ht]
\centering
\includegraphics[width=0.8\columnwidth]{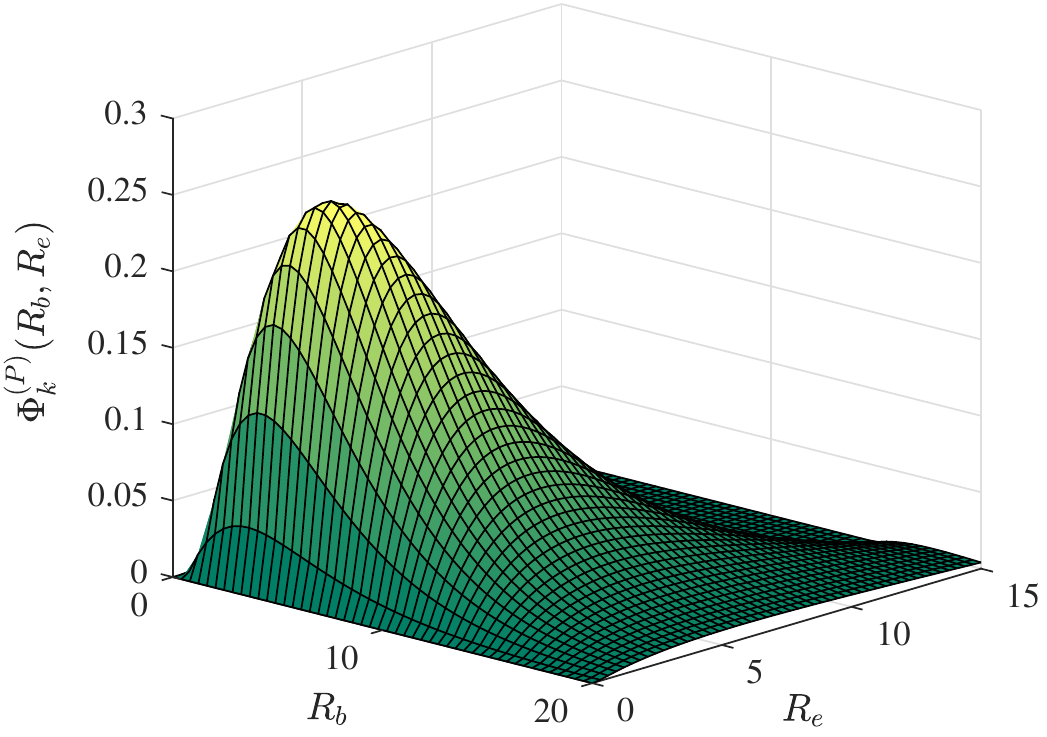}
\caption{EST with perfect SIC for fixed transmission rate with $N=2$, $k=1$ and $r_p=50$ m.}
\label{ESTfixedN2k1}
\end{figure}

The value of $R_e$ that maximizes the EST, noted as $R_e^\dag$, has been determined numerically and shown in Fig. \ref{graficaESTMaxRe} as a function of $R_b$ and $\lambda_e$, with $N=2$, $k=1$ and $r_p=50$ m. Note that the ratio between $R_e^\dag$ and $R_b$ is not linear. We also observe that a higher density of eavesdroppers requires a higher redundancy rate to optimize the EST.

\begin{figure}[!ht]
\centering
\includegraphics[width=0.7\columnwidth]{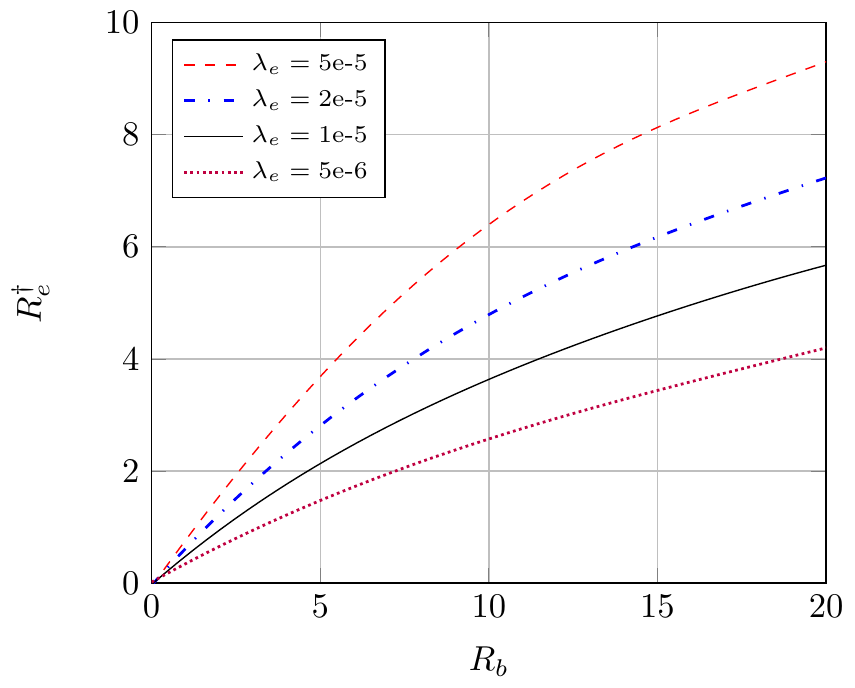}
\caption{Optimum value of $R_e$ that maximizes the EST as a function of $R_b$ and $\lambda_e$, considering fixed transmission with $N=2$, $k=1$ and $r_p=50$ m.}
\label{graficaESTMaxRe}
\end{figure}

Fig. \ref{graficaESTfixedvsRb} shows a comparison between the EST for fixed rate transmission with perfect SIC, $\Phi_k^{(P)}$, and imperfect SIC, $\Phi_k^{(I)}$. EST results are shown for $N=2$ NOMA users as a function of $R_b$, assuming a value of $R_e=3$ bps and $r_p=50$ m. We observe that the results for the first user ($k=1$) are the same for perfect and imperfect SIC since imperfect SIC models the propagation of decoding errors from previous decoded users. We also observe that, in the case of perfect SIC, the maximum EST for the second user is not degraded significantly compared to the first user, as the larger distance to the BS is compensated by the fact that the second user does not experience (ideally) any intra-cluster interference. However, in the case of imperfect SIC, the second user is highly degraded compared to the first user due to SIC error propagation from the previous decoded user. Note also that the value of $R_b$ that maximizes the EST is different of each LU, so optimum code rate selection at the base station must be done per LU.

\begin{figure}[!ht]
\centering
\includegraphics[width=0.7\columnwidth]{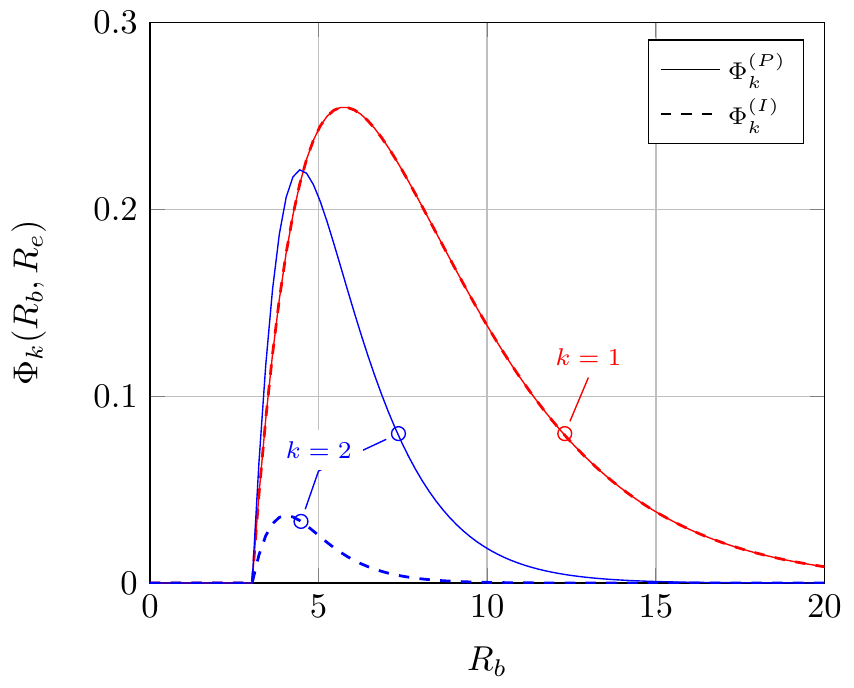}
\caption{Comparison between the EST for fixed rate transmission scheme with perfect SIC and imperfect SIC, with $N=2$, $R_e=3$ and $r_p=50$ m.}
\label{graficaESTfixedvsRb}
\end{figure}

{ Fig. \ref{figRpmin} shows the value of the minimum eavesdropper-exclusion radius ($r_{p_{min}}$) that ensures a certain EST value. Results are shown for the first user ($k=1$), with $N=2$, as a function of the eavesdropper density ($\lambda_e$). It is observed that a higher eavesdropper-exclusion radius is require to achieve the minimum EST target as $\lambda_e$ or $\rho_e$ is increased. It is also observed that for low $\lambda_e$ values, there is no need of including an exclusion area to achieve the EST target.}

\begin{figure}[!ht]
\begin{center}
\includegraphics[width=0.7\columnwidth]{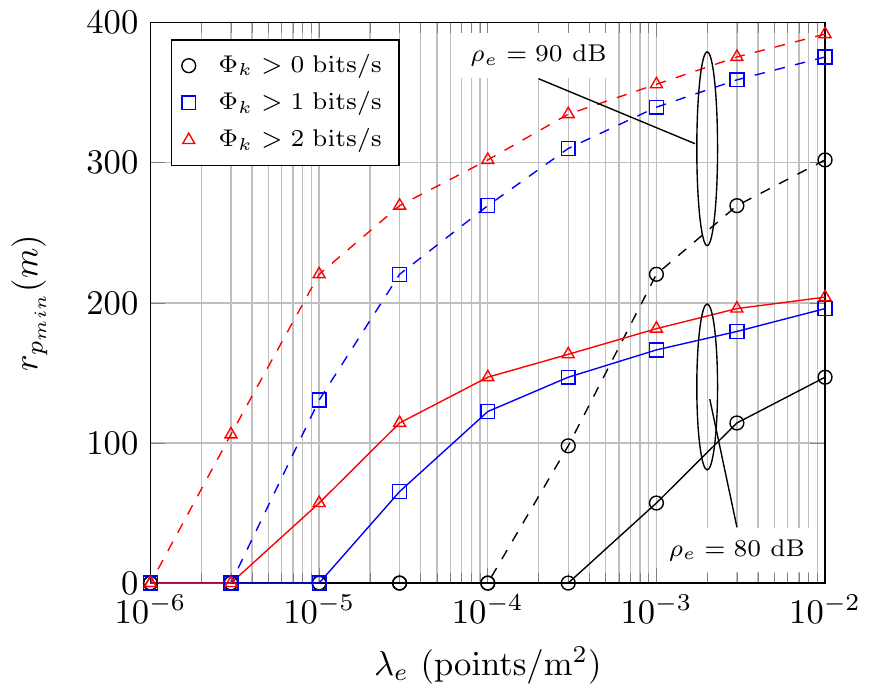}
\caption{{ Minimum value of the eavesdropper-exclusion radius ($r_{p_{min}}$) that ensures a target EST ($\Phi _k$) as a function of $\lambda_e$, for $N=2$, $k=1$ and $R_e=1$.}}
\label{figRpmin}
\end{center}
\end{figure}

\subsection{Adaptive Transmission Rate}

In this section we provide performance results in case the BS uses the CSI of LUs to enforce an adaptive transmission scheme. 

Fig. \ref{EST_vs_Re} shows EST results of the first user ($k=1$) as a function of the redundancy rate, $R_e$, assuming $N=2$ NOMA users. In this case, $R_b$ is adapted to the channel capacity, i.e. $R_b=C_b$, whereas the value of $R_e$ must be properly designed. In that sense, there is a value of $R_e$ that maximizes the EST. We also observe that higher eavesdropper-exclusion radii enhance the EST. As mentioned before, in case of adaptive transmission, the reliability constraint does not affect the secrecy performance; therefore, no SIC errors are considered in this case. Note that if the difference between $\rho_b$ and $\rho_e$ is higher (due to the value of $\sigma_e^2$ compared to $\sigma_b^2$), the EST is considerably increased.

\begin{figure}[!ht]
\begin{center}
\includegraphics[width=0.7\columnwidth]{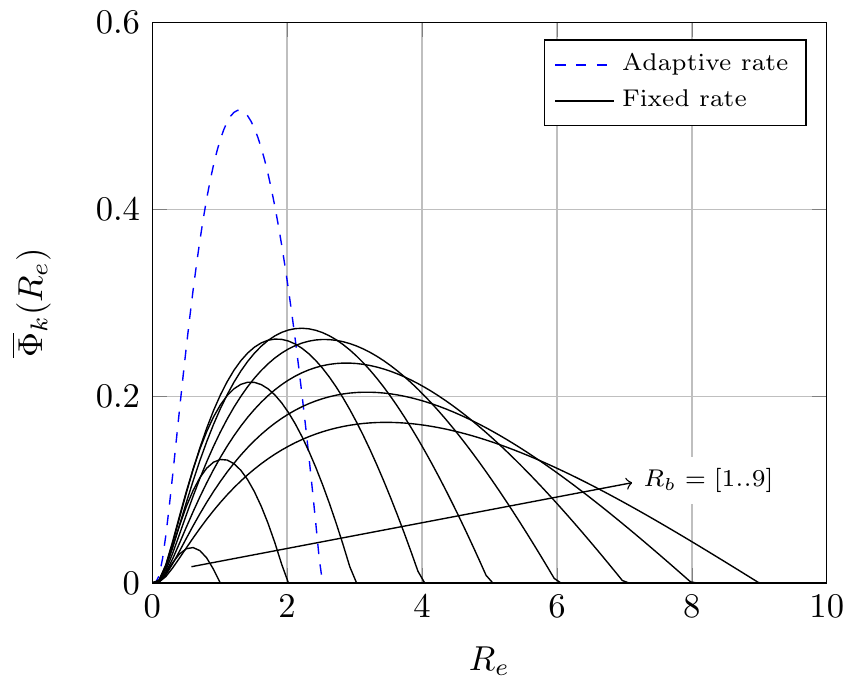}
\caption{{ Comparison between the EST for adaptive rate and fixed rate transmission schemes with perfect SIC, with $N = 2$, $k=1$, $r_p = 50$ m.}}
\label{Fixed_vs_adaptive_EST}
\end{center}
\end{figure}

{ Fig. \ref{Fixed_vs_adaptive_EST} shows a performance comparison between the EST for adaptive rate and fixed rate transmission schemes with perfect SIC as a function of $R_e$, with $N = 2$, $k=1$ and $r_p = 50$ m. Fixed rate results are plotted for different $R_b$ values (from 1 to 9) whereas adaptive rate transmission scheme selects a value of $R_b$ such that $R_b = C_b$. It is observed that adaptive rate transmission outperforms fixed rate transmission. The reason is that, in case of adaptive transmission, the reliability constraint does not affect the performance as $\left[1-\O_r(R_b)\right]=1$, whereas in case of fixed transmission with perfect SIC: $\left[1-\O_r(R_b)\right]={F_{{\gamma _e}}}\left( {{2^{R_e} - 1}} \right)$. It is also observed that the value of $R_e$ than maximizes the EST in both cases are different.}

\begin{figure}[!ht]
\begin{center}
\includegraphics[width=0.7\columnwidth]{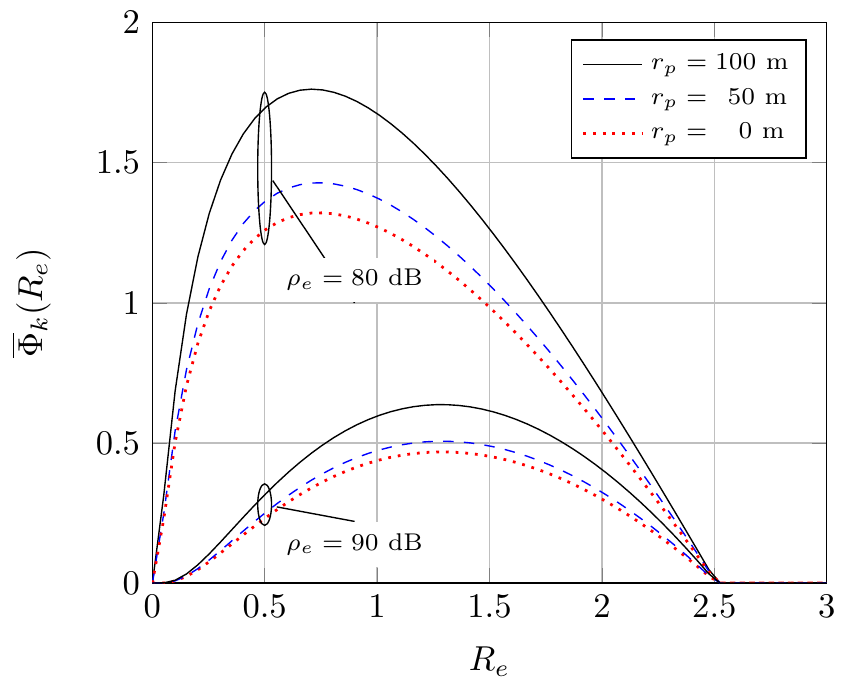}
\caption{EST of the first user ($k=1$) versus $R_e$ for adaptive transmission with $N=2$ and $\rho_b=110$ dB.}
\label{EST_vs_Re}
\end{center}
\end{figure}

The impact of the eavesdropper-exclusion radius on the EST is depicted in Fig. \ref{graficaESTadaptvsRp}. We observe an increasing S-shape behavior as $r_p$ grows, since the most detrimental eavesdropper reduces its detection capabilities for higher $r_p$ values. Results match perfectly with Remark \ref{remark_rp}, which stated that for $r_p \rightarrow \infty$,  eavesdroppers do not have any impact on the performance.

\begin{figure}[!ht]
\centering
\includegraphics[width=0.7\columnwidth]{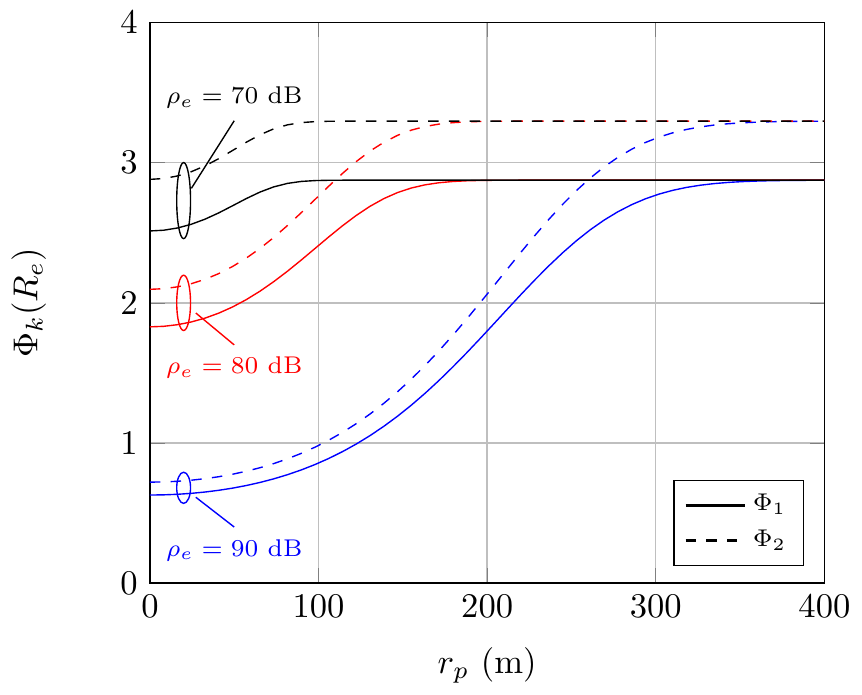}
\caption{EST of the $kth$ user versus $r_p$ for adaptive transmission with $N=2$ and $\rho_b=110$ dB.}
\label{graficaESTadaptvsRp}
\end{figure}

EST results as a function of the density of eavesdroppers, $\lambda_e$, is shown in Fig. \ref{graficaESTadaptvsLambda}. We observe an exponential decreasing behavior with $\lambda_e$. As stated in Remark \ref{remark_density}, when $\lambda_e$ tends to zero, the EST is mainly determined by the capacity of the LU's link; on the contrary, when $\lambda_e$ tends to infinity, the EST is zero, although higher eavesdropper-exclusion radii lead to a slower EST degradation. {  Results also show the performance gain as the number of antennas $M$ is increased.}

\begin{figure}[!ht]
\centering
\includegraphics[width=0.7\columnwidth]{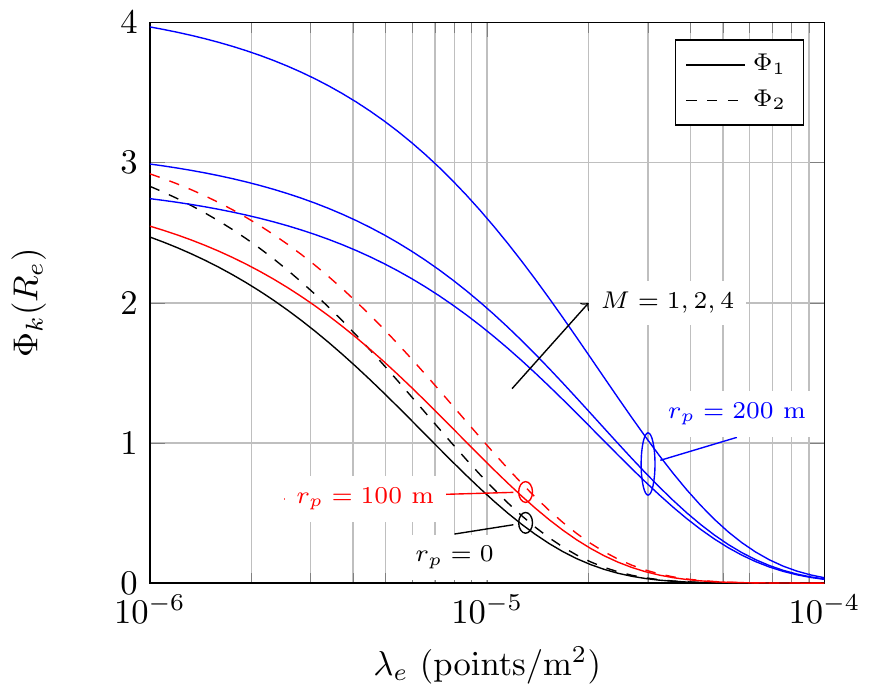}
\caption{EST of the $kth$ user versus $\lambda_e$ for adaptive transmission with $N=2$ and $R_e=1$; {  number of antennas $M=1,2,4$.}}
\label{graficaESTadaptvsLambda}
\end{figure}

Fig. \ref{graficaESTadaptvsPot} shows the EST for adaptive transmission for the $kth$ user as a function of the transmission power $(P_T)$ of the LU measured in dBm/Hz. We observe an optimum value of $P_T$, which depends on the specific values of $\rho_e$ and $k$. We have considered an eavesdropper-exclusion radius of \mbox{$r_p=50$ m} and an average noise power received at the BS of \mbox{$\sigma_b^2=-160$ dBm/Hz}; note that the default value of $\rho_b=110$ dB would give a value of $P_T=-50$ dBm/Hz, or equivalently, a \mbox{$P_T=23$ dBm} for a bandwidth of 20 MHz, which is a typical power value for a micro-cell. Results show that very low $P_T$ values lead to a very poor performance since the average SINR of the LUs is very low (reliability constraint); on the other hand. When then transmit power is increased, there is a optimum value above which the EST starts decreasing, since the eavesdroppers are also increasing their detecting capabilities (secrecy constraint). Results also show that higher values of $\rho_e  = \frac{{{P_T}}}{{{\sigma _e^2}}}$ degrades considerably the EST. We also observe that the performance of the first and second LUs differs significantly as $\rho_e$ is increased. We must recall that in the adaptive transmission, the last user is ideally free of intra-cluster interference, and hence, its performance is limited by noise. Therefore, the second user is much more affected by the value of $\rho_e$. In case of high noise power at eavesdroppers (low $\rho_e$) the second user is shown to outperform the first user despite being further from the BS.  

\begin{figure}[!ht]
\centering
\includegraphics[width=0.7\columnwidth]{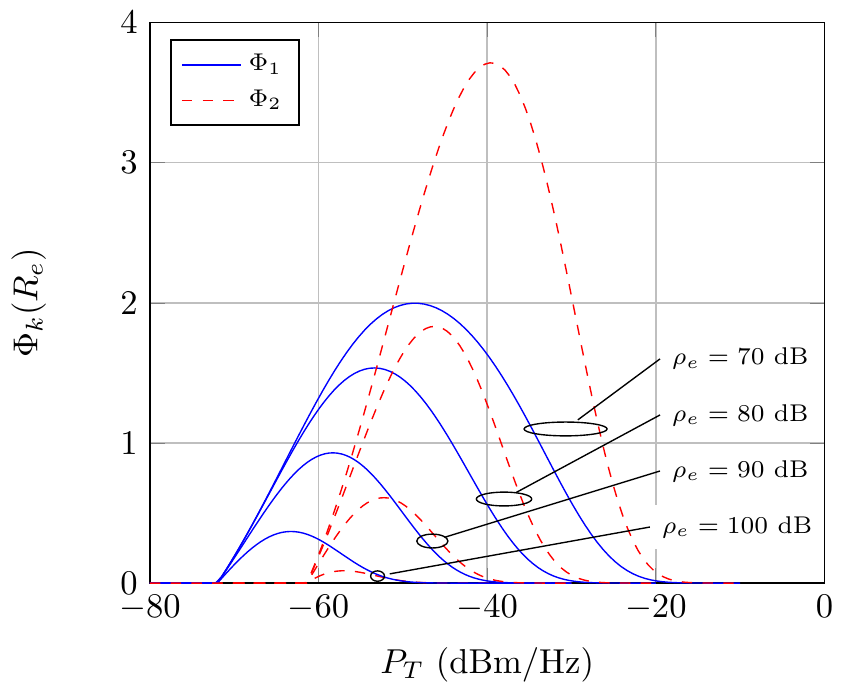}
\caption{EST of the $kth$ user versus the transmission power $P_t$ for adaptive transmission as a function of $\rho_e$, with $N=2$, $R_e=1$, $r_p=50$ m, $\lambda_e=10^{-5}$ points/m$^2$ and $\sigma_b^2=-160$ dBm/Hz.}
\label{graficaESTadaptvsPot}
\end{figure}

Fig. \ref{graficaESTadaptvsPot} shows the EST of the first NOMA user ($k=1$) versus the transmission power $P_t$ for adaptive transmission as a function of the eavesdropper density, $\lambda_e$. We observe that the optimum transmit power value is very affected by $\lambda_e$. In fact, lower eavesdropper densities lead to higher EST, although an adjustment of the transmit power is critical to achieve such maximum. For the limit case of no eavesdroppers ($\lambda_e=0$) there is no EST degradation for high $P_T$ values, as the secrecy constraint is null. 

\begin{figure}[!ht]
\centering
\includegraphics[width=0.7\columnwidth]{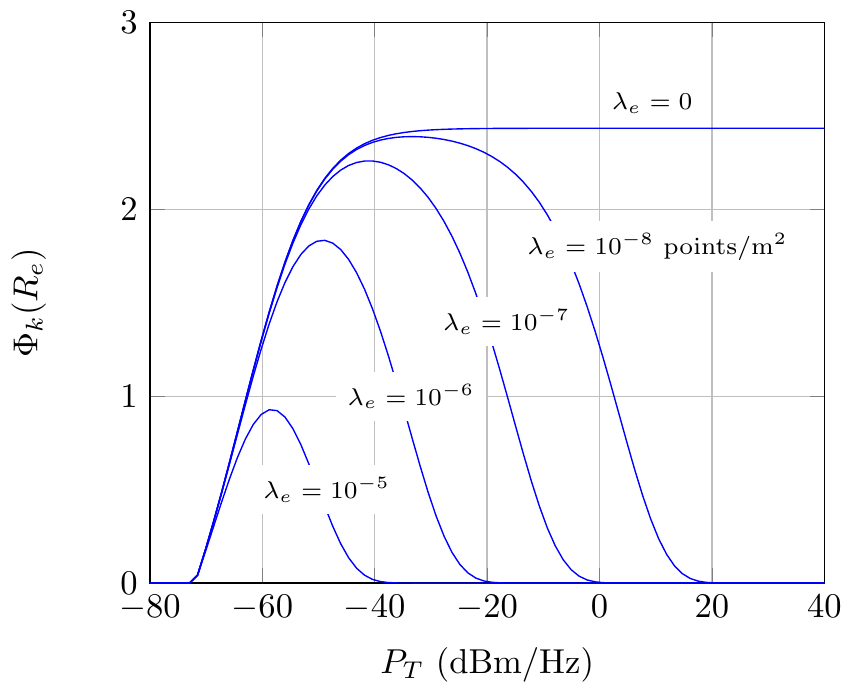}
\caption{EST of the first NOMA user ($k=1$) versus the transmission power $P_t$ for adaptive transmission as a function of $\lambda_e$, with $N=2$, $R_e=1$, $\rho_e=90$ dB, $r_p=50$ m and $\sigma_b^2=-160$ dBm/Hz.}
\label{graficaESTadaptvsPotLambda}
\end{figure}

\section{Conclusions}
In this paper, we analyzed the performance of UL NOMA for a generic number of simultaneous users, both from a connection level perspective and from a physical layer security viewpoint. We considered a passive eavesdropping scenario in which the BS and LUs are not aware of their CSI, and different cases depending on whether the LUs use a fixed or an adaptive transmission scheme. { We also considered the use of multiple antennas at the BS.} Our analysis includes the impact of an imperfect SIC during NOMA detection and an eavesdropper-exclusion radius to enhance the secrecy metrics.

We obtained new analytical expressions for the coverage probability in the uplink for LUs and eavesdroppers. In addition, we provide simple analytical expressions for the EST, which captures explicitly the reliability constraint and secrecy
constraint of wiretap channels. Our analysis allows determining the wiretap code rates that achieve the maximum EST. Performance results also help designing optimum values of the transmit power $(P_T)$ and the eavesdropper-exclusion radius $(r_p)$ in order to enhance the overall EST.

% if have a single appendix:
%\appendix[Proof of the Zonklar Equations]
% or
%\appendix  % for no appendix heading
% do not use \section anymore after \appendix, only \section*
% is possibly needed

% use appendices with more than one appendix
% then use \section to start each appendix
% you must declare a \section before using any
% \subsection or using \label (\appendices by itself
% starts a section numbered zero.)
%
\appendices
\section{Proof of Lemma 1}
\label{Appendix_pk}

{  The ccdf of the SINR for the $kth$ user, $p_k(t)$, assuming $M$ antennas at the BS, can be expressed as
\begin{align}
\label{pk_MRC}
\bar{F}_{\gamma_k}(t)& =\mathbb{P}\left[ \gamma_k > t\right]   \nonumber 
\\ &
  	\overset{(a)}{=} \int_0^{r_c} \mathbb{P}
      \left[\gamma_k > t \rvert r_k\right] 
         f_{R_k}(r_k) \mathrm{d}r_k \nonumber \\
   & 	\overset{(b)}{=} \int_0^{r_c}
      \mathbb{P}\left[h_k > { t(I + \rho_b^{-1})r_k^\alpha} \rvert
       r_k\right] 
      f_{R_k}(r_k) \mathrm{d}r_k \nonumber \\
&=
 \int_0^{r_c} 
      \mathbb{E}_{I} 
      \left[
      \mathbb{P}\left[h_k > {t(i + \rho_b^{-1})r_k^\alpha}
         \rvert r_k,i\right]
      \right]
      f_{R_k}(r_k) \mathrm{d}r_k \nonumber \\&
	\overset{(c)}{=}  
\int_0^{r_c } {{\rm{e}}^{{{ - tr_k^\alpha } \mathord{\left/
 {\vphantom {{ - tr_k^\alpha } {\rho _b }}} \right.
 \kern-\nulldelimiterspace} {\rho _b }}} \mathbb{E}_{I|r_k } \left[ {{\rm{e}}^{ - tIr_k^\alpha } \left. {\sum\limits_{r = 0}^{M - 1} {\frac{{\left( { t\left( {I + \rho _b^{ - 1} } \right)r_k^\alpha } \right)^r }}{{r!}}} } \right|r_k } \right]f_{R_k } (r_k ){\rm{d}}r_k } 
\end{align}
where $(a)$ and $(b)$ follow from the total probability theorem \cite{Papoulis}, while $(c)$ follows from the fact that $H_k$ has a Gamma distribution with ccdf given by \eqref{gammaccdf}.

Using the binomial expansion $\left( {a + b} \right)^r  = \sum\limits_{k = 0}^r {{r}\choose{k}} a^{r - k} b^k$ and considering $\psi=t r_k^\alpha$, it yields
\begin{align}
\bar{F}_{\gamma_k}(t) &=
\int_0^{r_c } {{\rm{e}}^{{-\psi  \mathord{\left/
 {\vphantom {-\psi  {\rho _b }}} \right.
 \kern-\nulldelimiterspace} {\rho _b }}} } \mathbb{E}_{I|r_k } \left[ {{\rm{e}}^{-\psi I} \left. {\sum\limits_{r = 0}^{M - 1} {\sum\limits_{k = 0}^r {\frac{{\psi^r  I^k}}{{(r-k)!k!}}\left( {\frac{1}{{\rho _b }}} \right)^{r-k} } } } \right|r_k } \right]f_{R_k } (r_k ){\rm{d}}r_k 
 \nonumber \\&
=\int_0^{r_c } {{\rm{e}}^{{-\psi \mathord{\left/
 {\vphantom {-\psi  {\rho _b }}} \right.
 \kern-\nulldelimiterspace} {\rho _b }}} } \sum\limits_{r = 0}^{M - 1} {\sum\limits_{k = 0}^r {\frac{{\psi ^r }}{{(r - k)!k!}}\left( {\frac{1}{{\rho _b }}} \right)^{r - k} } } \left( {\int_0^\infty  {{\rm{e}}^{-\psi I} I^k f_I (I){\rm{d}}I} } \right)f_{R_k } (r_k ){\rm{d}}r_k 
 \nonumber \\&
= \int_0^{{r_c}} {{{\rm{e}}^{ - {{\psi} \mathord{\left/
 {\vphantom {{\psi} {{\rho _b}}}} \right.
 \kern-\nulldelimiterspace} {{\rho _b}}}}}} \sum\limits_{r = 0}^{M - 1} {\sum\limits_{k = 0}^r {\frac{{{{ \psi }^r}{{\left( { - 1} \right)}^{k}}}}{{(r - k)!k!\rho _b^{r - k}}}} } \frac{{{{\rm{d}}^k}}}{{{\rm{d}}{s^k}}}{{\cal L}_{I|{r_k}}}\left( s \right)|_{s=\psi} {f_{{R_k}}}({r_k}){\rm{d}}{r_k}
\end{align}
}

The term $\mathcal{L}_{I|r_k}(s) = \mathbb{E}_{I|r_k} 
      \left[
      \mathrm{e}^{I} 
         \rvert r_k\right]$
represents the Laplace transform of the intra-cluster interference conditioned on $r_k$, which can be expressed as
\begin{align}
\label{Li}
\mathcal{L}_{I|{r_k}}&(s) = {\mathbb{E}_{{r_j}|{r_k},{h_j}}}\left[ {\exp \left( { - s\sum\limits_{j = k + 1}^N {{h_j}r_j^{ - \alpha }} } \right)} \right] \nonumber \\
 &= {\mathbb{E}_{{r_j}|{r_k},{h_j}}}\left[ {\prod\limits_{j = k + 1}^N {{\rm{exp}}\left( { - s{h_j}r_j^{ - \alpha }} \right)} } \right]\nonumber \\
 &
 \overset{(a)}{=} \prod\limits_{j = k + 1}^N {{\mathbb{E}_{{r_j}|{r_k},{h_j}}}\left[ {{\rm{exp}}\left( { - s{h_j}r_j^{ - \alpha }} \right)} \right]} \nonumber \\
 &
 = {{\left( {{\mathbb{E}_{{r_j}|{r_k}}}\left[ \frac{1}{{1 + sr_j^{-\alpha} }} \right]} \right)}^{N - k}} 
 \nonumber \\
 &
  \overset{(b)}{=} {\left( {\int_{{r_k}}^{{r_c}} \frac{1}{{1 + sr_j^{-\alpha} }}{\frac{{2{r_j}}}{{r_c^2 - r_k^2}} } {\rm{d}}{r_j}}\right)^{N - k}}\nonumber \\
 &
 ={\left( {\frac{{2\left( {r_c^{\alpha + 2}\Omega \left( { - {{r_c^\alpha } \mathord{\left/
 {\vphantom {{r_c^\alpha } s}} \right.
 \kern-\nulldelimiterspace} s}} \right) - r_k^{\alpha + 2}\Omega \left( { - {{r_k^\alpha } \mathord{\left/
 {\vphantom {{r_k^\alpha } s}} \right.
 \kern-\nulldelimiterspace} s}} \right)} \right)}}{{s\left( {r_c^2 - r_k^2} \right)\left( {\alpha  + 2} \right)}}} \right)^{N - k}}
\end{align}
\noindent
being $\Omega \left( {{x}} \right)={{}_2{F_1}\left[ {1,\frac{{\alpha + 2}}{\alpha },2 + \frac{2}{\alpha }, x} \right]}$. Step $(a)$ comes from the fact that the fading is independent of the BPP and, although $jth$ users' location are correlated with $kth$ user when their distances are ordered, the computation of the interference can be obtained considering that the $N-k$ NOMA interfering users are located within a disk whose inner radius is $r_k$ and outer radius $r_c$. Step $(b)$ comes from the fact that the pdf of the distance from a randomly located point within that disk is given by ${f_{{R_j}\left| {{R_k}} \right.}}({r_j}\left| {{r_k}} \right.) = {{2{r_j}} \mathord{\left/
 {\vphantom {{2{r_j}} {\left( {r_c^2 - r_k^2} \right)}}} \right.
 \kern-\nulldelimiterspace} {\left( {r_c^2 - r_k^2} \right)}}$. { Note that the MRC combination does not change the distribution of the interference in our scenario, as stated in \cite{Romero08,Shah2000}.}

In \cite{Srinivasa2010}, the marginal pdf of the $kth$ nearest point to the origin of a BPP is given. In particular, this work shows that, in a BPP consisting of $N$ points randomly distributed in a 2-dimensional ball of radius $r_c$ centered at the origin, the Euclidean distance $R_k$ from the origin to its $kth$ nearest point follows a generalized beta distribution
\begin{equation}
{f_{{R_k}}}(r_k) = \frac{2}{{{r_c}}}\frac{{\Gamma \left( {k + \frac{1}{2}} \right)\Gamma \left( {N + 1} \right)}}{{\Gamma \left( k \right)\Gamma \left( {N + \frac{3}{2}} \right)}}\beta \left( {\frac{{{r_k^2}}}{{r_c^2}};k + \frac{1}{2},N - k + 1} \right)
\label{fRk}
\end{equation}

Substituting (\ref{Li}) and (\ref{fRk}) into (\ref{pk_MRC}) the proof is complete.

\section*{Acknowledgements}
This work has been supported by the Spanish Government (Ministerio de Econom\'ia y Competitividad) under grant  TEC2016-80090-C2-1-R, and Universidad de M\'alaga.

\bibliographystyle{ieeetr}
\bibliography{bibfile}

\begin{thebibliography}{10}

\bibitem{Ding17}
Z.~Ding, Y.~Liu, J.~Choi, Q.~Sun, M.~Elkashlan, C.-L. I, and H.~V. Poor,
  ``Application of non-orthogonal multiple access in {LTE and 5G} networks,''
  {\em IEEE Commun. Mag.}, vol.~55, pp.~185--191, Feb. 2017.

\bibitem{Liu16}
Y.~Liu, Z.~Ding, M.~Elkashlan, and H.~V. Poor, ``Cooperative non-orthogonal
  multiple access with simultaneous wireless information and power transfer,''
  {\em IEEE J. Sel. Areas Commun.}, vol.~34, pp.~938--953, April 2016.

\bibitem{Tabassum16}
H.~Tabassum, M.~S. Ali, E.~Hossain, M.~J. Hossain, and D.~I. Kim,
  ``Non-orthogonal multiple access {(NOMA)} in cellular uplink and downlink:
  Challenges and enabling techniques,'' {\em arXiv:1608.05783}, 2016.

\bibitem{Endo12}
Y.~Endo, Y.~Kishiyama, and K.~Higuchi, ``{Uplink non-orthogonal access with
  MMSE-SIC in the presence of inter-cell interference},'' in {\em 2012
  International Symposium on Wireless Communication Systems (ISWCS)},
  pp.~261--265, Aug 2012.

\bibitem{Ding17b}
Z.~Ding, X.~Lei, G.~K. Karagiannidis, R.~Schober, J.~Yuan, and V.~K. Bhargava,
  ``A survey on non-orthogonal multiple access for 5g networks: Research
  challenges and future trends,'' {\em IEEE Journal on Selected Areas in
  Communications}, vol.~35, pp.~2181--2195, Oct 2017.

\bibitem{Yang16}
Z.~Yang, Z.~Ding, P.~Fan, and N.~Al-Dhahir, ``A general power allocation scheme
  to guarantee quality of service in downlink and uplink {NOMA} systems,'' {\em
  IEEE Trans. Wireless Commun.}, vol.~15, pp.~7244--7257, Nov 2016.

\bibitem{Zhang16b}
N.~Zhang, J.~Wang, G.~Kang, and Y.~Liu, ``Uplink nonorthogonal multiple access
  in {5G} systems,'' {\em IEEE Commun. Letters}, vol.~20, pp.~458--461, March
  2016.

\bibitem{Tabassum17}
H.~Tabassum, E.~Hossain, and M.~J. Hossain, ``Modeling and analysis of uplink
  non-orthogonal multiple access {(NOMA)} in large-scale cellular networks
  using poisson cluster processes,'' {\em IEEE Trans. Commun.}, vol.~PP,
  no.~99, pp.~1--1, 2017.

\bibitem{Rabee17}
F.~A. Rabee, K.~Davaslioglu, and R.~Gitlin, ``{The optimum received power
  levels of uplink non-orthogonal multiple access (NOMA) signals},'' in {\em
  2017 IEEE 18th Wireless and Microwave Technology Conference (WAMICON)},
  pp.~1--4, April 2017.

\bibitem{Wyner1975}
A.~D. Wyner, ``The wire-tap channel,'' {\em Bell System Technical Journal},
  vol.~54, no.~8, pp.~1355--1387, 1975.

\bibitem{Gopala2008}
P.~K. Gopala, L.~Lai, and H.~El~Gamal, ``On the secrecy capacity of fading
  channels,'' {\em IEEE Trans. Inf. Theory}, vol.~54, no.~10, pp.~4687--4698,
  2008.

\bibitem{Oggier2011}
F.~Oggier and B.~Hassibi, ``The secrecy capacity of the {MIMO} wiretap
  channel,'' {\em IEEE Trans. Inf. Theory}, vol.~57, no.~8, pp.~4961--4972,
  2011.

\bibitem{Liu2013}
X.~Liu, ``Probability of strictly positive secrecy capacity of the
  {R}ician-{R}ician fading channel,'' {\em IEEE Wireless Commun. Lett.},
  vol.~2, pp.~50--53, Feb. 2013.

\bibitem{Lopez2015}
F.~J. Lopez-Martinez, G.~Gomez, and J.~M. Garrido-Balsells, ``Physical-layer
  security in free-space optical communications,'' {\em IEEE Photon. J.},
  vol.~7, pp.~1--14, April 2015.

\bibitem{Chen2017}
X.~Chen, D.~W.~K. Ng, W.~H. Gerstacker, and H.~H. Chen, ``A survey on
  multiple-antenna techniques for physical layer security,'' {\em Commun.
  Surveys Tuts.}, vol.~19, pp.~1027--1053, Secondquarter 2017.

\bibitem{Liu2017-2}
Y.~Liu, H.~H. Chen, and L.~Wang, ``Physical layer security for next generation
  wireless networks: Theories, technologies, and challenges,'' {\em Commun.
  Surveys Tuts.}, vol.~19, pp.~347--376, Firstquarter 2017.

\bibitem{He2017}
B.~He, A.~Liu, N.~Yang, and V.~K.~N. Lau, ``On the design of secure
  non-orthogonal multiple access systems,'' {\em IEEE Journal on Selected Areas
  in Communications}, vol.~35, pp.~2196--2206, Oct 2017.

\bibitem{Zhang2016}
Y.~Zhang, H.~M. Wang, Q.~Yang, and Z.~Ding, ``Secrecy sum rate maximization in
  non-orthogonal multiple access,'' {\em IEEE Commun. Letters}, vol.~20,
  pp.~930--933, May 2016.

\bibitem{Chen18}
J.~Chen, L.~Yang, and M.~S. Alouini, ``Physical layer security for cooperative
  noma systems,'' {\em IEEE Transactions on Vehicular Technology}, vol.~67,
  pp.~4645--4649, May 2018.

\bibitem{Qin2016}
Z.~Qin, Y.~Liu, Z.~Ding, Y.~Gao, and M.~Elkashlan, ``{Physical layer security
  for 5G non-orthogonal multiple access in large-scale networks},'' in {\em in
  Proc. Int. Commun. Conf. (ICC)}, pp.~1--6, May 2016.

\bibitem{Liu2017}
Y.~Liu, Z.~Qin, M.~Elkashlan, Y.~Gao, and L.~Hanzo, ``Enhancing the physical
  layer security of non-orthogonal multiple access in large-scale networks,''
  {\em IEEE Trans. Wireless Commun.}, vol.~16, pp.~1656--1672, March 2017.

\bibitem{Lv2018}
L.~Lv, Z.~Ding, Q.~Ni, and J.~Chen, ``{Secure MISO-NOMA Transmission With
  Artificial Noise},'' {\em IEEE Transactions on Vehicular Technology},
  vol.~67, pp.~6700--6705, July 2018.

\bibitem{Lei2017}
H.~Lei, J.~Zhang, K.~H. Park, P.~Xu, I.~S. Ansari, G.~Pan, B.~Alomair, and
  M.~S. Alouini, ``{On Secure NOMA Systems With Transmit Antenna Selection
  Schemes},'' {\em IEEE Access}, vol.~5, pp.~17450--17464, 2017.

\bibitem{Lei2018}
H.~Lei, J.~Zhang, K.~H. Park, P.~Xu, Z.~Zhang, G.~Pan, and M.~S. Alouini,
  ``{Secrecy Outage of Max-Min TAS Scheme in MIMO-NOMA Systems},'' {\em IEEE
  Transactions on Vehicular Technology}, pp.~1--1, 2018.

\bibitem{Jiang18}
K.~Jiang, T.~Jing, Y.~Huo, F.~Zhang, and Z.~Li, ``{SIC-Based Secrecy
  Performance in Uplink NOMA Multi-Eavesdropper Wiretap Channels},'' {\em IEEE
  Access}, vol.~6, pp.~19664--19680, 2018.

\bibitem{Yan15}
S.~Yan, N.~Yang, G.~Geraci, R.~Malaney, and J.~Yuan, ``Optimization of code
  rates in {SISOME} wiretap channels,'' {\em IEEE Trans. Wireless Commun.},
  vol.~14, pp.~6377--6388, Nov 2015.

\bibitem{Ding14}
Z.~Ding, Z.~Yang, P.~Fan, and H.~V. Poor, ``On the performance of
  non-orthogonal multiple access in {5G} systems with randomly deployed
  users,'' {\em IEEE Signal Process. Lett.}, vol.~21, pp.~1501--1505, Dec 2014.

\bibitem{Abramowitz65}
M.~Abramowitz and I.~Stegun, {\em Handbook of Mathematical Functions}.
\newblock Dover Publications, 1965.

\bibitem{Hildebrand87}
E.~Hildebrand, {\em Introduction to numerical analysis}.
\newblock NewYork, NY, USA, Dover, 1987.

\bibitem{Chiu13}
S.~N. Chiu, D.~Stoyan, W.~S. Kendall, and J.~Mecke, {\em Stochastic Geometry
  and Its Applications}.
\newblock Wiley Series in Probability and Statistics, 2013.

\bibitem{Barros2006}
J.~Barros and M.~R.~D. Rodrigues, ``Secrecy capacity of wireless channels,'' in
  {\em IEEE International Symposium on Information Theory}, pp.~356--360, Jul.
  2006.

\bibitem{Monteiro15}
M.~E.~P. Monteiro, J.~L. Rebelatto, R.~D. Souza, and G.~Brante, ``Maximum
  secrecy throughput of transmit antenna selection with eavesdropper outage
  constraints,'' {\em IEEE Signal Processing Letters}, vol.~22, pp.~2069--2072,
  Nov 2015.

\bibitem{Yu17}
H.~Yu, S.~Guo, Y.~Yang, and B.~Xiao, ``Optimal target secrecy rate and power
  allocation policy for a swipt system over a fading wiretap channel,'' {\em
  IEEE Systems Journal}, pp.~1--12, 2017.

\bibitem{Wang16}
L.~Wang, Y.~Cai, Y.~Zou, W.~Yang, and L.~Hanzo, ``Joint relay and jammer
  selection improves the physical layer security in the face of csi feedback
  delays,'' {\em IEEE Transactions on Vehicular Technology}, vol.~65,
  pp.~6259--6274, Aug 2016.

\bibitem{Yang17}
M.~Yang, B.~Zhang, Y.~Huang, N.~Yang, D.~B. da~Costa, and D.~Guo, ``Secrecy
  enhancement of multiuser miso networks using ostbc and artificial noise,''
  {\em IEEE Transactions on Vehicular Technology}, vol.~66, pp.~11394--11398,
  Dec 2017.

\bibitem{Wang17}
W.~Wang, K.~C. Teh, and K.~H. Li, ``Secrecy throughput maximization for miso
  multi-eavesdropper wiretap channels,'' {\em IEEE Transactions on Information
  Forensics and Security}, vol.~12, pp.~505--515, March 2017.

\bibitem{Jiang17}
K.~Jiang, T.~Jing, F.~Zhang, Y.~Huo, and Z.~Li, ``{ZF-SIC Based Individual
  Secrecy in SIMO Multiple Access Wiretap Channel},'' {\em IEEE Access},
  vol.~5, pp.~7244--7253, 2017.

\bibitem{Monteiro18}
M.~E.~P. Monteiro, J.~L. Rebelatto, R.~D. Souza, and G.~Brante, ``Maximum
  secrecy throughput of mimome fso communications with outage constraints,''
  {\em IEEE Transactions on Wireless Communications}, vol.~17, pp.~3487--3497,
  May 2018.

\bibitem{Chen18b}
D.~Chen, Y.~Cheng, W.~Yang, J.~Hu, and Y.~Cai, ``Physical layer security in
  cognitive untrusted relay networks,'' {\em IEEE Access}, vol.~6,
  pp.~7055--7065, 2018.

\bibitem{Papoulis}
A.~Papoulis and S.~U. Pillai, {\em Probability, Random Variables, and
  Stochastic Processes}.
\newblock McGraw-Hill series in electrical engineering: Communications and
  signal processing, Tata McGraw-Hill, 2002.

\bibitem{Romero08}
J.~M. Romero-jerez and A.~J. Goldsmith, ``Receive antenna array strategies in
  fading and interference: An outage probability comparison,'' {\em IEEE
  Transactions on Wireless Communications}, vol.~7, pp.~920--932, March 2008.

\bibitem{Shah2000}
A.~Shah and A.~M. Haimovich, ``Performance analysis of maximal ratio combining
  and comparison with optimum combining for mobile radio communications with
  cochannel interference,'' {\em IEEE Transactions on Vehicular Technology},
  vol.~49, pp.~1454--1463, Jul 2000.

\bibitem{Srinivasa2010}
S.~Srinivasa and M.~Haenggi, ``Distance distributions in finite uniformly
  random networks: Theory and applications,'' {\em IEEE Trans. Veh. Technol.},
  vol.~59, pp.~940--949, Feb 2010.

\end{thebibliography}

\end{document}